\documentclass[lettersize,journal]{IEEEtran}
\IEEEoverridecommandlockouts
\usepackage[utf8]{inputenc}
\usepackage[english]{babel}
\usepackage{cite}
\usepackage{color,amssymb,amsmath,amsthm,amsfonts,bbm,enumitem}
\definecolor{darkgreen}{rgb}{0.0,0.5,0.0}
\setlength{\parindent}{1em}
\usepackage{algorithmic}
\usepackage[linesnumbered,ruled,vlined]{algorithm2e}
\usepackage{pseudocode}
\SetKwInput{KwInput}{Input}                
\SetKwInput{KwOutput}{Output}              
\usepackage{graphicx}
\usepackage{caption}
\captionsetup[figure]{name={Fig.},labelsep=period,font=small}
\usepackage{stfloats}
\usepackage{url}
\usepackage{verbatim}
\graphicspath{{Figure/}}
\usepackage{multirow}
\usepackage{epstopdf}
\DeclareGraphicsExtensions{.eps}
\usepackage{array}
\usepackage{adjustbox,tabularx, booktabs}
\usepackage{makecell}
\usepackage{footnote}
\makesavenoteenv{tabular}
\newcolumntype{L}[1]{>{\raggedright\let\newline\\\arraybackslash\hspace{0pt}}m{#1}}
\usepackage{authblk}
\newtheorem{theorem}{Theorem}[section]

\theoremstyle{definition}
\newtheorem{definition}{Definition}
\newtheorem{example}{Example}
\theoremstyle{remark}

\usepackage{textcomp}
\usepackage{xcolor}
\hyphenation{op-tical net-works semi-conduc-tor IEEE-Xplore}
\usepackage{balance}
\usepackage{hyperref}
\def\BibTeX{{\rm B\kern-.05em{\sc i\kern-.025em b}\kern-.08em
T\kern-.1667em\lower.7ex\hbox{E}\kern-.125emX}}

\usepackage{xr}
\makeatletter

\newcommand*{\addFileDependency}[1]{
\typeout{(#1)}
\@addtofilelist{#1}
\IfFileExists{#1}{}{\typeout{No file #1.}}
}\makeatother


\begin{document}
\title{A Repeated Auction Model for Load-Aware Dynamic Resource Allocation in Multi-Access Edge Computing  \\
\author[]{Ummy Habiba, Setareh Maghsudi, and Ekram Hossain,~\textit{IEEE Fellow}}
\thanks{U. Habiba and E. Hossain are with the Department of Electrical and Computer Engineering, University of Manitoba, Canada (habibau@myumanitoba.ca, ekram.hossain@umanitoba.ca). S. Maghsudi is with the Department of Computer Science, University of Tübingen, and Fraunhofer Heinrich-Hertz-Institute, Berlin, Germany (setareh.maghsudi@uni-tuebingen.de).}}
\maketitle
\begin{abstract}
Multi-access edge computing (MEC) is one of the enabling technologies for high-performance computing at the edge of the 6G networks, supporting high data rates and ultra-low service latency. Although MEC is a remedy to meet the growing demand for computation-intensive applications, the scarcity of resources at the MEC servers degrades its performance. Hence, effective resource management is essential; nevertheless, state-of-the-art research lacks efficient economic models to support the exponential growth of the MEC-enabled applications market. We focus on designing a MEC offloading service market based on a repeated auction model with multiple resource sellers (e.g., network operators and service providers) that compete to sell their computing resources to the offloading users. We design a computationally-efficient modified Generalized Second Price (GSP)-based algorithm that decides on pricing and resource allocation by considering the dynamic offloading requests arrival and the servers' computational workloads. Besides, we propose adaptive best-response bidding strategies for the resource sellers, satisfying the symmetric Nash equilibrium (SNE) and individual rationality properties. Finally, via intensive numerical results\footnote{Source code available at: \url{github.com/titthi/Repeated_GSP_MEC_Offloading}}, we show the effectiveness of our proposed resource allocation mechanism.
\end{abstract}
\begin{IEEEkeywords}
Multi-access edge computing, computation offloading, resource allocation, GSP auction, pricing.
\end{IEEEkeywords}
\section{Introduction}
\label{sec:introduction}
The large-scale deployment of intelligent wireless applications involves numerous computationally-intensive and latency-critical tasks. Although advanced smart devices possess a significant processing capacity, they suffer from limited battery life. Besides, the centralized cloud computing infrastructure is not viable due to long propagation delays, thus a low quality-of-service (QoS). To face this challenge, Mobile Edge Computing, also called Multi-access Edge Computing (MEC) emerges offering computing capabilities at the edge of the radio access network (RAN) \cite{W_Shi_2016}. In the MEC framework, the computing servers are in the proximity of data sources, i.e. user equipment (UE), and computational tasks are executed by harvesting idle computing resources and storage from the edge servers. Thus, the end-to-end communication delay reduces significantly, which makes MEC the prominent choice for provisioning emerging wireless applications. These include over-the-top multi-media streaming services \cite{Barakabitze_2020}, online interactive games \cite{Bilal_2017}, augmented and virtual reality, and tactile internet \cite{Sukhmani_2019}, and video analytic \cite{J_Wang_2018}.

The MEC concept enables wireless service- and infrastructure providers to access heterogeneous fixed and mobile wireless access technologies in WiMax, 4G/LTE, 5G networks, and beyond through software-based services, security solutions, and network functionalities. That facilitates the integration of MEC with the existing 3GPP network architecture without making significant changes to the hardware infrastructure \cite{ETSI_virtualization_2019}. A MEC system is deployable using the RAN elements such as base stations (BSs), access points (APs), and gateways, which typically host the MEC application programming interfaces (APIs) \cite{ETSI_MEC5G_2020}. A MEC system is also implementable in central locations such as data centers of network operators or on moving nodes like passenger vehicles or UAVs. As such, a MEC system can utilize local radio-network contextual information to guarantee secure, reliable, and privacy-preserving services based on intelligent analysis and data processing capabilities at the edge \cite{Amin_2021}.

The smooth operation of the MEC system depends on the availability of computing resources (i.e., the physical or virtual component of CPUs, memory, and storage), given that the task arrives at the MEC servers in random order. Thus, an efficient resource provisioning scheme is required to address the disparity in computing resources at MEC servers with different computational capabilities. On top of that, users' resource demands are dynamic as they vary over time concerning the lengths of computational tasks. The allocation of computing resources involves the partitioning of resource units meeting various resource demands, assigning these resource slices to compute tasks, and finally releasing the allocated resource components upon completion of task execution. Hence the overall resource allocation includes interdependent decision-making processes in a wireless environment, where channel state information varies with time. Therefore, it is crucial to develop an optimal and effective resource provisioning mechanism, with capabilities to make resource allocation decisions dynamically that satisfies varying users' resource demands and computational capabilities of MEC servers \cite{Chen_2018}.

In the context of such a MEC-integrated wireless network, we focus on modeling a MEC service provisioning market using software-defined network (SDN) functionalities. This market incorporates the flexibility to virtualize MEC network infrastructure and resources so a single MEC system can interact with multiple network operators. That enables a multi-vendor multi-access edge computing environment where each MEC service provider (i.e., a vendor) can deploy its application through an individual MEC system and provide services to users of different network operators. It creates new business opportunities in the IT and telecom industry, where rivalry arises among various mobile network operators (MNOs) and MEC service providers. Although the MEC market is currently the fastest-growing software segment \cite{sabella_2021}, policies for network resources-, infrastructure-, and revenue sharing are yet to be standardized for the competitive multi-vendor MEC service trading in wireless networks, i.e., 5G  and beyond.

In this paper, we address the challenges in dynamic resource allocation to provision computation offload as a service under a multi-user multi-vendor MEC business model. In this MEC platform,  several UEs offload computing tasks with heterogeneous QoS requirements, and a number computation service providers participate looking for opportunities to provide offloading as 
a service by executing the offloaded tasks at their computing servers with various computational capabilities. To handle the uncertainty and changing conditions in provisioning the computing resources to offloading UEs, we study the dynamic resource allocation mechanisms which can derive optimal matching decisions between the offloading tasks and computing servers, as well as resource allocation price decisions facilitating the offloading service trading between profit-driven MEC servers and self-interested UEs \cite{Mao_2017}. 

There exist a variety of resource allocation approaches in the literature, mostly focusing on optimizing the resource allocation efficiency using distributed algorithms, combinatorial optimization, game theory, and network utility-based solutions. Among these approaches, auction game theory has been widely applied in MEC offloading services due to its economic properties which establishes linkage between allocation efficiency and fairness in service pricing. However, several research challenges in MEC resource provisioning still remain unaddressed, particularly in modeling the changing conditioning in determining the offloading decision variables during the auction process. Existing online auction-based resource allocation algorithms facilitate meeting UEs' QoEs, but cannot ensure guaranteed profits to MEC service providers. In the presence of multiple service providers, the strategic behavior among the competitive providers has not been addressed while deriving the bidding strategies in the current auction models. Therefore, it is crucial to guarantee that the auction results in equilibrium allocation pricing strategies such that participating sellers and buyers (i.e., MEC servers and UEs) are well off and no bidder (i.e., MEC servers) manipulates the outcomes by misreporting bid prices vindictively. To summarize, the overall MEC service provisioning auction process should be computationally efficient and implementable in a real environment with polynomial time complexity. Additionally, the MEC offloading auction needs to satisfy economic properties, such as allocation efficiency, individual rationality, and incentive compatibility.
\subsection{Contributions}
The key contributions of this work are as follows: 
\begin{itemize}
    \item We propose a service-oriented multi-user multi-vendor MEC system architecture, that deploys computing offloading as a service trading between wireless UEs and MEC servers. 
    \item We present computational workload-aware resource management policies, which implement the offloading task assignment and computing resource allocation process dynamically through an auction. We design the auction model based on a repeated Generalized Second Price (GSP) mechanism, which results computationally efficient, individually rational, and symmetric Nash Equilibrium (SNE) resource allocation solution for every auction round.
    \item In addition, we propose best-response restricted balanced bidding (RBB) strategies for the resource sellers (i.e., MEC servers). We also analyze their adaptive bidding behavior to achieve the SNE equilibrium resource allocations under a dynamic environment.
    \item Besides theoretical analysis, we perform intensive numerical experiments to evaluate the proposed auction mechanism and the quality computation offloading service for different MEC services. The results demonstrate  performance superiority of our proposed RBB bidding strategies and GSP-based resource allocation algorithm compared to the existing welfare-maximizing VCG mechanism.
\end{itemize}
The rest of this paper is organized as follows: the state-of-the-art research is discussed in \textbf{Section~\ref{sec:related_work}}. In \textbf{Section~\ref{sec:system_model}}, we present the multi-user multi-vendor MEC offloading system architecture. We model our computational workload-aware approach to address dynamic task offloading and resource allocation decision problems. In \textbf{Section~\ref{sec:repeated_GSP_mechanism}}, we propose a repeated GSP-based MEC offloading auction mechanism and present the resource allocation and pricing algorithm that determines the offloading and resource allocation decisions. \textbf{Section~\ref{sec:bidding_analysis}} includes the adaptive bidding strategies for the MEC server that satisfy the equilibrium allocations in the dynamic setting, along with individual rationality and computational efficiency. Numerical results are presented in \textbf{Section~\ref{sec:results}}. \textbf{Section~\ref{sec:conclusion}} concludes the paper by adding some remarks. 
\section{Related Work}
\label{sec:related_work}
\begin{table*}[ht!]
\centering
\caption{Overview of studies on auction-based resource allocation mechanism for MEC offloading service provisioning}
\label{tab:MEC_offloading_auction}
\def\arraystretch{1.2}%
\setlength{\arrayrulewidth}{0.5 pt}
\adjustbox{width=\textwidth}{
\begin{tabular}{| p{0.9 cm} | p{2.8 cm} | p{0.6 cm} |p{1.4 cm} | p{1.8 cm} | p{1 cm} | p{1.65 cm} | p{3 cm} | p{0.2 cm} | p{0.2 cm} | p{0.25 cm} |}
    \hline
    \multirow{2}{*}{\textbf{Ref.}} & \multirow{2}{*}{\textbf{Objective}} & \multirow{2}{*}{\textbf{\thead{Users' \\ QoS\textsuperscript{1}}}} & \multirow{2}{*}{\textbf{Resources\textsuperscript{2}}} & \multirow{2}{*}{\textbf{\thead{Auction \\ Model}}} & \multirow{2}{*}{\textbf{\thead{Payment  \\ Rule}}} & \multirow{2}{*}{\textbf{\thead{Matching \\ Model\textsuperscript{3}}}} & \multirow{2}{*} {\textbf{\thead{Allocation  \\ Algorithm}}} & \multicolumn{3}{l|}{\textbf{\thead{Economic \\ Properties\textsuperscript{4}} }} \\ \cline{9-11}
     &    &  &  &  &  &  &  & \textbf{IC} & \textbf{IR} & \textbf{CE} \\
     \hline
     \cite{Sun_2018} & Maximize no. of matched pairs & No & CR & Double auction & Critical Payment & many-to-one & Break-even and dynamic pricing-based heuristic & \checkmark & \checkmark & \checkmark \\
     \hline
     \cite{Q_Wang_2019} & Maximize sum profit of MEC clouds & Yes & CR & Multi-round auction& Vickrey auction & many-to-one & Bid performance ratio-based heuristic & - & - & - \\
     \hline
     \cite{U_Habiba_2019}  & Maximize sum profit of MEC servers & Yes & CR & Online position auction & GSP auction & many-to-one & GSP-based heuristic & - & \checkmark & \checkmark \\
     \hline
     \cite{Q_Li_2020}  & Sum utility of servers and UEs & No & CR & Double auction & winning bid & one-to-one & Experience-weighted attraction based heuristic  & - & - & - \\
     \hline
     \cite{T_Le_2020} & Maximize sum utility of MEC clouds and MNO & No & Bandwidth & Randomized auction & Fractional VCG & many-to-one & Greedy heuristic & \checkmark & \checkmark & \checkmark \\
     \hline
     \cite{Y_Li_2020}  & Maximize long-term social welfare & Yes & $N$-types CR & Online double auction & Critical payment & many-to-one & Matching probability-based heuristic & \checkmark & \checkmark & \checkmark \\
     \hline
     \cite{Z_Shen_2020}  & Maximize sum utility of UEs and SP & No & $N$-types CR & Forward auction & VCG auction & one-to-many & Dynamic programming-based heuristic & \checkmark & - & - \\
     \hline
     \cite{G_Gao_2021}   & Maximize social welfare & Yes & VM & Forward auction& Critical payment & many-to-one & Heuristic with $(\gamma+1)$ approx. ratio & \checkmark & \checkmark & \checkmark \\
     \hline
     \cite{Q_Wang_2022} & Maximize sum profit of MEC clouds & Yes & CR & Online multi-round auction & Vickrey auction & many-to-one & Bid performance ratio-based heuristic  & \checkmark & \checkmark & \checkmark \\
     \hline
     \cite{F_Li_2022}  & Maximize expected utility of macro BS & No &  CR \& WR & second price auction & second price auction & one-to-one & optimal with symmetric Bayesian Nash equilibrium & \checkmark & - & - \\
     \hline
     \cite{L_Zhang_2020_2}   & Maximize sum valuation of QoS & Yes & CR & multi-round auction & VCG auction & many-to-one & Kuhn-Munkras algorithm for bipartite graph & \checkmark & \checkmark & \checkmark \\
     \hline
     \cite{H_Hong_2020} & Maximize social welfare & No & CR & double auction & winning bid & many-to-many & heuristic based on minimum cost flow model & \checkmark & \checkmark & \checkmark \\
     \hline
     \cite{R_Zeng_2020} & Maximize profits of MEC nodes & No & $N$-types CR & reverse auction & first price & many-to-one & heuristic based on expected utility theory & \checkmark & \checkmark & \checkmark \\
     \hline
     \cite{H_Lee_2021} & Maximize profit of MEC SP & No & CR & Myerson auction & virtual bid payment & one-to-one & second price  & \checkmark & \checkmark & \checkmark \\
     \hline
     \cite{Y_Su_2022} & Maximize social welfare & Yes & CR \& WR & Combinatorial auction & Critical payment & many-to-one & Combination of optimal and heuristic & \checkmark & \checkmark & \checkmark \\
     \hline
     Proposed & Maximize sum valuation of resource allocation & Yes & CR & Online position auction & GSP auction & many-to-one & repeated auction-based heuristic & - & \checkmark & \checkmark \\
     \hline
     
\end{tabular}
}
\begin{minipage}{\textwidth}
\vspace{0.05 cm}
\footnotesize{\textsuperscript{1} Users' QoS requirements are considered in terms of service latency and energy consumption constraints.} \\
\footnotesize{\textsuperscript{2} Resources that sellers provide include computing resources (CR), virtual machine (VM) instances, wireless channel resources (WR), and bandwidth.} \\
\footnotesize{\textsuperscript{2} Matching model represents the maximum number of UE (i.e. buyer) that is assigned to a single MEC server/cloud (i.e. seller): \textit{one-to-one} model means one UE can be served at one MEC server/cloud, \textit{one-to-many} model means one UE can be served by more than one MEC servers/clouds at the same time, and \textit{many-to-one} model means multiple UEs can be served at the same MEC server/cloud simultaneously.} \\
\footnotesize{\textsuperscript{3} Essential economic properties in the design of auction mechanism include incentive compatibility (IC), individual rationality (IR), and computational efficiency (CE).}
\end{minipage}
\end{table*}

Recently, MEC has gained popularity as a means to implement a wide range of cloud-based applications and services in wireless networks. It is crucial to utilize the limited resources of the MEC servers efficiently. Existing researches primarily focus on enhancing users' QoE in terms of reduced task execution latency, energy consumption, and offloading cost.  With the growing user demands, the competition among the MEC service providers is also increasing. Majority of the existing works study the MEC offloading system, assuming the MEC servers to be managed a single vendor (i.e. MNO or service provider). Thus, they do not explore the competition among multiple MNOs or service providers. In such a competitive scenario, the vendors may behave strategically to maximize their own profits, by manipulating the allocation prices by misreporting their asking prices. Therefore, it is essential to design an economically efficient resource allocation pricing mechanism, that benefits both the UEs (i.e. resource buyers) and MEC vendors (i.e. resource sellers). \\ \indent
Towards this goal,  the cutting-edge research develops several distributed game- and optimization-based approaches \cite{X_Chen_2016,J_Zhang_2018,H_Guo_2018,M_Liu_2018,Q_Pham_2020}. Among various game-theoretic models, auction is a widely popular approach to study strategic interactions between rational entities in a competitive market scenario. Due to its inherent economic properties and competitive equilibrium strategies, auction model has become the evident choice to develop resource management policies, especially in a dynamic environment in the cloud computing market. In Table~\ref{tab:MEC_offloading_auction}, we summarize the existing auction-based MEC resource allocation mechanisms, comparing the research objectives, QoS criteria, auction framework, and solution approaches. \\ \indent
The state-of-the-art MEC research explores different auction-based resource allocation approaches, addressing the following decision problems: (i) Task-server association, (ii) Resource allocation/provisioning for task-server pairs, and (iii) Computational resource pricing. In the auction framework, the task assignment- and resource allocation problems are often formulated as the winner determination problem (WDP). The problem is then solved using methods from convex optimization, mixed-integer programming, dynamic programming, bipartite graph matching, generalized assignment, etc. For economic efficiency, existing research relies on classic pricing mechanisms such as first-price, second-price, Vickrey-Clarke-Grove (VCG), and critical value-based Myopic auction-based pricing rules. However, there are very few works that study the biddding strategies tackling servers' preferences \cite{Sun_2018,Q_Li_2020,Y_Li_2020,H_Hong_2020,R_Zeng_2020}, and the analysis of strategic stability and convergence of resource allocation pricing policies in dynamic auction game \cite{Q_Li_2020,U_Habiba_2019,R_Zeng_2020}. \\ \indent
Existing auction-based works consider several approaches to address the heterogeneity in MEC resources, such as: jointly allocating computing and wireless resources \cite{F_Li_2022, Y_Su_2022}, defining multi-dimensional computing resources \cite{Y_Li_2020, Z_Shen_2020, R_Zeng_2020}, and partitioning computing resources into virtualized instances \cite{G_Gao_2021}. However, it is still unclear how the MEC system faces the service-specific computational capacity requirements in modeling the MEC resources. In order to address the heterogeneity in offloading users' QoS requirements, very few works investigate the resource allocation problem considering  task deadline constraints \cite{Zhou_2018}, and dynamic transmission delays in the wireless environment \cite{G_Gao_2021}. Nonetheless, they do not investigate the dynamics in the processing servers, whereas the computation offloading performance greatly depends on the current computational workloads of allocated VM resources. \\ \indent
Moreover, existing auctions consider homogeneous pricing strategies, that determine the allocation prices based on the UE-server pairs. Such matching criteria restrict each MEC server to at most one task per user during an offloading period. In a real scenario, however, a single UE may offload multiple tasks with different computing resource requirements. For example, an online gaming UE may simultaneously require computation offloading for the gaming application, VR application, and location-aware navigation application. In such a scenario, to meet the UE's QoS requirements, the offloading tasks should be matched with suitable servers to be processed immediately. A  heterogeneous resource valuation model, can implement the flexibility to handle multiple offloading requests from the same UE, by provisioning computing resources at task-specific allocation prices. Addressing the shortcomings in the literature, we focus on developing a generic service-oriented MEC offloading framework that provisions computing resources to users at competitive prices, and satisfies users with heterogeneous QoS requirements.

\begin{table}[t]
\caption{Key notations}
\label{tab:notation}
\def\arraystretch{1.2}
\centering
\begin{tabular}{|p{1 cm}|p{6 cm}|}
    \hline 
    \textbf{Symbol} &   \textbf{Description}  \\
    \hline 
    $N$ & Number of MEC application processors
    \\
    $J$ & Number of offloading UEs 
    \\ 
    $I$ & Number of MEC servers 
    \\ 
    $\gamma_{i',j}$ & Uplink data rate between UE $j$ and MEC node $i'$ 
    \\ [1ex]
    $\mathcal{R}^n$ & Set of VMs at processor $n$
    \\ 
    $R_i^n$ & Number of VMs of type $n$ at server $i$  
    \\ 
    $r_{im}^n$ & $m$-th VM of type $n$ at server $i$
    \\ [1ex]
    $\mathcal{C}_i^n$ & Computing power of each VM of type $n$ at server $i$
    \\ 
    $W_i^n$ & Number of vCPUs in each VM of type $n$ at server $i$
    \\
    $f_{C_i}^n$ & CPU frequency for a VM of type $n$ at server $i$
    \\ 
    $\beta^n$ & Computational capacity requirements of processor $n$ 
    \\ 
    $\tau_{\max}^n$ & Task completion deadline (msec) in processor $n$ 
    \\ 
    $f_{C_{\min}}^n$ & Minimum CPU frequency required by processor $n$
    \\ 
    $\eta_{i,r_m}^n$ & Workload (MB) of $m$-th VM of type $n$ at server $i$ \\ [1ex]
    $\Gamma_{i,r_m}^n$ & Load per capacity of  $m$-th VM of type $n$ at server $i$ \\ [1ex]
    $\phi_{i,r_m}^n$ & Resource utilization rate of $m$-th VM of type $n$ at server $i$ \\ [1ex]
    $\theta_{i,r_m}^n$ & Expected quality score of $m$-th VM of type $n$ at server $i$ \\ [1ex]
    $v_{i,r_m}^n$ & Valuation of the $m$-th VM of type $n$ at server $i$ \\ [1ex]
    $b_{i,r_m}$ & Bid submitted by server $i$ for the $m$-th VM of type $n$ \\ [1ex]
    $y_{i,r_m}^n$ & Ranking score of the $m$-th VM of type $n$ at server $i$ \\ [1ex]
    $\mathcal{K}^n$ & Task queue with $K=J$ positions at processor $n$
    \\ [1ex]
    $k_{j}^n$ & Computing task of type $n$ offloaded by UE $j$ \\ [1ex]
    $d_{k_j}^n$ & Length of task (MB) offloaded by UE $j$ to processor $n$ \\ [1ex]
    $\lambda_{s_j}^n$ & Task priority index of UE $j$ in processor $n$
    \\ [1ex]
    $x_{s,r_{im}}^n$ & Offloading task-VM matching decision variable \\ [1ex]
    $p_s^n$ & Allocation price ($\$$/VM-hour) decision variable \\ [1ex]
    $u_{i,r_m}^n$ & Utility gain of the $m$-th VM of type $n$ at server $i$ \\ [1ex]
    \hline
\end{tabular}
\end{table}
\section{System Model and Assumptions}
\label{sec:system_model}
%
\begin{figure*}[ht]
    \centering
    \includegraphics[scale=0.6]{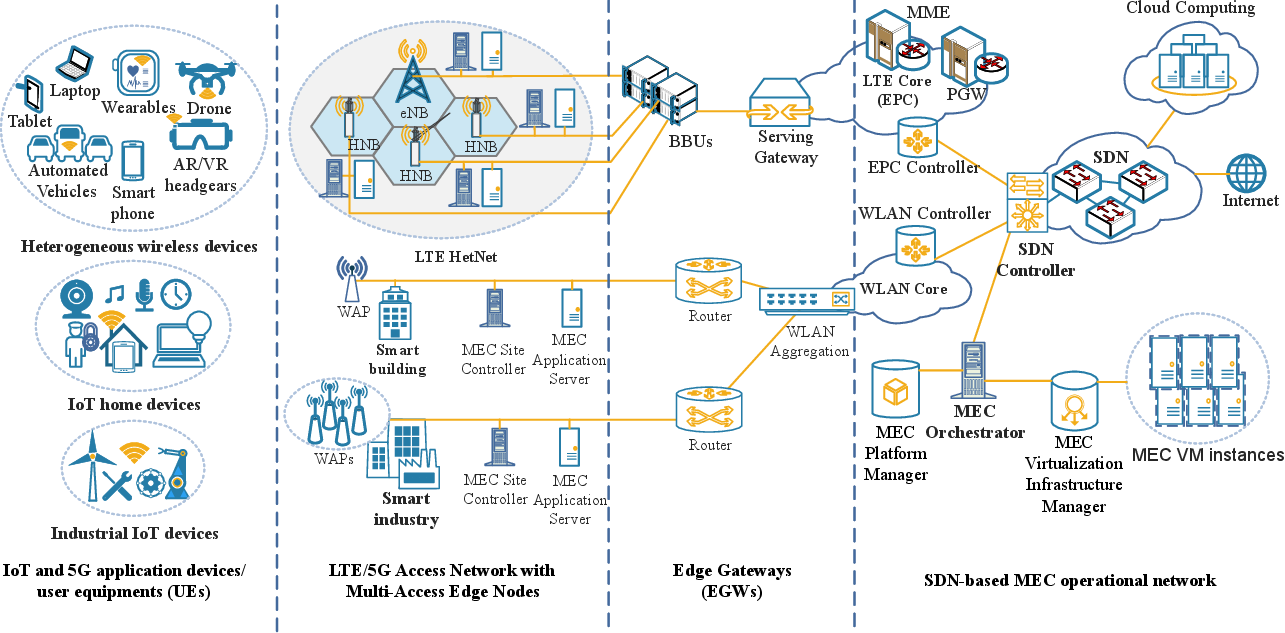}
    \caption{SDN-based multi-tier multi-access edge computing infrastructure and communication network architecture.}
    \label{fig:multi_tier_MEC}
\end{figure*}
We consider a multi-tier multi-access edge computing (MEC) offloading framework as depicted in \textbf{Fig.~\ref{fig:multi_tier_MEC}}. The overall computation and communication functionalities of the network consist of the following tiers:
\begin{itemize}
\item \textbf{Tier 1:} The first tier comprises several heterogeneous 5G wireless application devices within the access network. Each device, referred to as user equipment (UE), is compatible with LTE and WiFi wireless communications standards, such as smartphones.
\item \textbf{Tier 2:} The second tier includes distinct wireless access nodes such as eNodeB (eNB) within the LTE and WLAN access networks. These wireless access points (WAPs) are the MEC nodes/sites deployed onto cellular base stations, buildings, and vehicles. Physical servers and site controllers locate within the premises of the WAPs and gather the computation offloading requests from UEs via the associated WAPs. 
\item \textbf{Tier 3:} In this tier, the computation offloading requests received by the MEC nodes are forwarded to the edge gateways (EGWs) using wireless routers and aggregation switches. The offloading requests are pre-processed at the EGWs and then directed to the MEC system-level controller node in the core network tier. 
\item \textbf{Tier 4:} The core tier is enabled with software-defined network (SDN) capabilities, where a central unit coordinates all the operational functionalities of LTE, WLAN, and MEC via different controller nodes. Due to the scarcity of resources, the servers might be unable to process all of the offloaded tasks. In that case, they forward some of them to the cloud computing platform and central data centers. In our model, a broker or hypervisor, namely, the MEC orchestrator, manages all the MEC components. 
\end{itemize}

A list of the key notations used in this paper in given in Table II.
\subsection{Wireless Network and Communication Model}
In a multi-tier network architecture, we consider $I$ MEC sites gathered in the set $\mathcal{I} = \{i\}_{i=1}^I$, where each site has a WAP, a computing server, and a controller node. We assume that different computation service providers operate each site while competing to earn more revenue by providing computation offloading as services for $N$ applications. The physical servers at the MEC sites have SDN functionalities to host  $N$ application processes simultaneously. We consider a centralized SDN hypervisor, referred to as the \textit{MEC orchestrator} or \textit{orchestrator}. The orchestrator coordinates all the control functionalities across the MEC sites and deploys the computation offloading service between MEC servers and UEs. 

Let $\{j\}_{j=1}^{J}$ be the set of UEs uniformly distributed across the MEC sites, and each UE $j$ is associated with the nearest WAP $i' \in \mathcal{I}$. The UEs get exclusive OFDMA sub-carriers to transmit on the wireless links without interference. For simplicity, we assume the UEs are stationary so the user association remains fixed. Therefore, the same WAP handles all the offloading requests from a user on the MEC site. However, when the offloading requests are forwarded to the MEC system, they can be processed at a different MEC site depending on the tasks' QoS requirements and the server's computational capabilities.

The offloading data rate (in Mbps) in the uplink between the UE $j$ and the associated WAP $i'$ is given by
\begin{align}
   \gamma_{i',j} = \mathrm{BW}_{i'} \log_2 \left( 1 + \frac{P_j^{\mathrm{up}} h_{i',j}}{\sigma_N^2} \right),
\end{align}
where $\sigma_N^2$ is the noise variance, $\mathrm{BW}_{i'}$ the bandwidth of the channel assigned by WAP $i'$, and $P^{\mathrm{up}}_j$ the transmit power of UE $j$.

We consider a generic transmission loss model \cite{ITU_2021}, assuming both UEs and WAPs are below the rooftop level regardless of their antenna heights. Thus the basic transmission loss (in dB) for short-range outdoor communication is given by 
\begin{equation}
h_{i',j} = 10 \mu_d \log_{10} \left( \mathrm{dist}_{i',j} \right) + \mu_0 + 10 \mu_f \log_{10} \left( f_t\right),
\end{equation}
where $\mathrm{dist}_{i',j}$ (in meter) is the distance between UE $j$ and WAP $i'$, and $f_t$ is the wireless channel's operating frequency. Besides, $\mu_d$ and $\mu_f$ are the coefficients that describe the growth of transmission loss with distance and frequency, respectively. Also, $\mu_0$ is the coefficient associated with the offset value of the basic transmission loss. 
\begin{figure}
    \centering
    \includegraphics[scale=0.82]{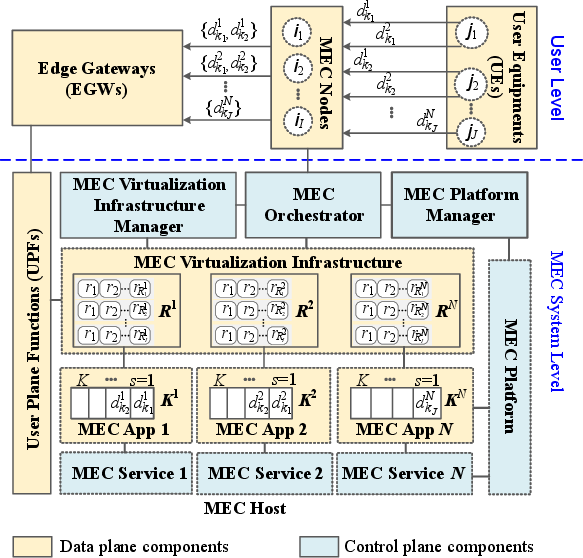}
    \caption{Service-based MEC system architecture.}
    \label{fig:MEC_architecture}
\end{figure}
\subsection{MEC Service Provisioning Model}
We model a service-oriented MEC system architecture \cite{ETSI_MEC_2019} to implement computation offloading as a service. As depicted in \textbf{Fig.~\ref{fig:MEC_architecture}}, the orchestrator is in charge of the overall computation offloading service provisioning process. It starts by receiving offloading requests from the UEs to allocate computing resources to process the offloaded tasks and manage corresponding resource allocation payments. The orchestrator interacts with the MEC sites, EGWs, and UEs through control and user plane functions (UPFs), and implements the offloading services with the help of two other SDN controllers: (a) MEC virtualization infrastructure manager and (b) MEC platform manager.  \\ \indent
The \textit{virtualization manager} mainly oversees the computing infrastructure resources (i.e. servers) of the MEC system. It abstracts and partitions the computational resources into virtualized CPU (vCPU) resource units using SDN functionalities. For each server $i$, the virtualization manager defines $N$ different sets of virtual machine (VM) instances and then allocates vCPUs into these instances aligning with the computational processing requirements for $N$ MEC applications/ services. As shown in\textbf{Fig.~\ref{fig:MEC_architecture}}, each processor $n$ has a resource pool consisting of $R^n = \sum_{i=1}^I r_{i}^n$ VMs, where $R_i^n$ indicates the number of VMs available at server $i$ to process the tasks of application $n$. We represent the $m$-th VM at server $i$ by $r_{im}^n$, which has computing power $C_{i}^n$ (MBps) and $W_{i}^n$ vCPUs each with CPU frequency $f_{C_i}^n$ (in Hz). \\ \indent
The \textit{platform manager} supervises the offloading task execution process by managing the $N$ MEC application processors via the MEC platform. The platform maintains a task queue $\mathcal{K}^n$ for each processor $n$ to handle the incoming offloading requests for application $n$. Each queue $\mathcal{K}^n$ has a fixed $(K=J)$ number of positions. Each position $s$ is associated with a user-specific task priority index $\lambda_{s_j}^n$ that helps prioritize the requests. \\ \indent
To complete each offloading task of application type $n$ within the deadline $\tau_{\max}^n$ (msec), the minimum average processing speed (MBps) for the UE $j$'s request yields $f_{C_{\min}}^n \geq (\Bar{d}_{j}^n / \tau_{\max}^n)$ \cite{A_Alnoman_2021}, where $\Bar{d}_{j}^n$ (MB) represents the average length of tasks offloaded by UE $j$. Therefore, we consider a delay-aware task prioritizing policy defining the task priority index for the $s$-th task position as 
\begin{equation}
\label{eqn:priority}
\lambda_{s_j}^n =  \dfrac{\Bar{d}_{j}^n}{\tau_{\max}^n},
\end{equation}
which prioritizes tasks with shorter task completion deadlines and requested by UEs who offloads tasks with larger sizes on average. Thus, the platform manager filters the incoming offloading requests according to the requested task type $n$, and forwards the requests to the corresponding processor $n$ and places them into the task queue $\mathcal{K}^n$ according to the requesting UE $j$'s priority index  $\lambda_{s_j}^n$. The orchestrator then coordinates the auction process that matches the tasks and suitable VMs. We consider non-preemptive priority-based task assignments at each processor, which means allocated VMs are not released until task processing finishes. 
\subsection{MEC Orchestration Model}
%
\begin{figure*}
    \centering
    \includegraphics[scale=0.9]{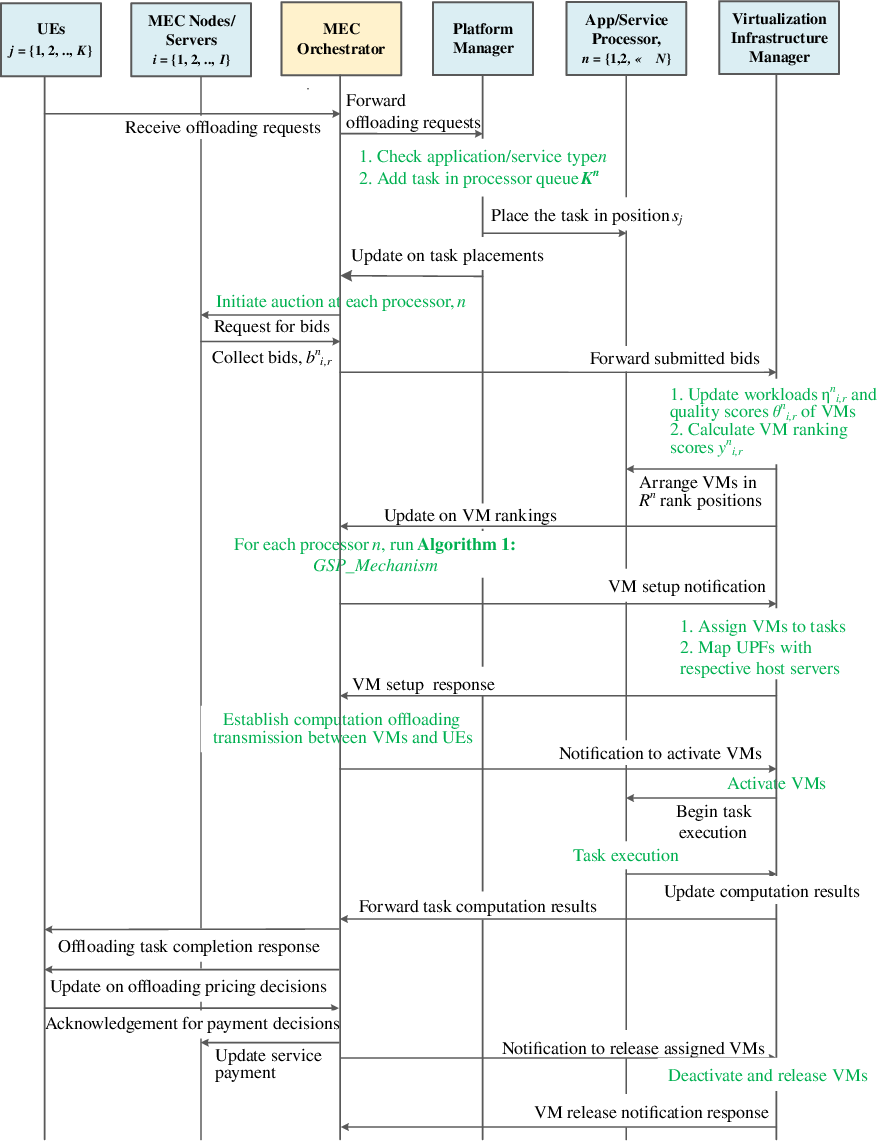}
    \caption{Workflow of the computation offloading mechanism in MEC framework.}
    \label{fig:MEC_workflow}
\end{figure*}
The orchestrator manages the end-to-end computation offloading service provisioning process through an auction. To model the dynamic conditions in computation offloading in MEC, we design the overall orchestration process based on an online or dynamic auction game game between two sets of players: UEs and MEC servers, while the orchestrator is the auctioneer. It runs the auction mechanism at discrete offloading time slots, $t=1, 2, \dots$, with a duration of $\Delta t$, at $N$ processors. Thus, it obtains $N$ sets of independent allocations. For each auction round in processor $n$, $K=J$ UEs are the resource buyers. The $I$ servers are then the sellers with $R^n$ VMs as the auction commodities. \\ \indent
Formally, at processor $n$ and time slot $t>0$, we express the offloading service provisioning state as $\langle \mathcal{K}^n (t), \mathcal{R}^n (t), \mathcal{X}^n (t), \mathcal{P}^n (t) \rangle$, where
\begin{itemize}
    \item $\mathcal{K}^n (t)$ represents the task queue with $K$ distinct positions. When the UE $j$'s request arrives at the processor $n$, its computing task $k_j^n$ with length $d_{k_j}^n$ (MB) takes position $s_j \in \mathcal{K}^n (t)$ with the priority index $\lambda_{s_j}^n$, 
    \item[-] $\mathcal{R}^n (t)$ represents the list of ranked VMs in processor $n$, where each VM $r_{im}^n$ has a position $m$. The position is updated every time slot based on the current computing processing score $\theta_{i,r_{m}}^n (t)$ and bid $b_{i,r_m}^n (t)$, 
    \item[-] $\mathcal{X}^n (t)$ represents the offloading task-VM matching decisions, where $x_{s,r_{im}}^n(t) \in \{0,1\}$ is the decision variable indicating if the task in position $s$ is assigned to the $m$-th VM at server $i$ in time slot $t$, 
    \item[-] $\mathcal{P}^n (t)$ represents the allocation pricing decisions. The decision variable $p_{s}^n(t) \geq 0$ indicates the amount (in $\$$/VM-hour) that the UE pays for offloading the task in position $s \in \mathcal{K}^n(t)$.
\end{itemize}
We describe the orchestration stages below and summarize the overall workflow in \textbf{Fig.~\ref{fig:MEC_workflow}}. 
\begin{itemize}
\setlength\itemsep{0.25em}
    \item[(i)] \textbf{Dynamic Queuing of Incoming Offloading Tasks}: At the beginning of time slot $t$, the orchestrator gathers the incoming offloading requests and forwards them to the platform manager. The platform manager checks their task types and arranges them into respective processing queues $\mathcal{K}^n(t)$. The offloading requests arrive dynamically, i.e. the orchestrator does not have any prior information about the number of tasks in the current time slot until the requests arrive.
    \item[(ii)] \textbf{Dynamic Resource Management Based on Computational Workloads}: The virtualization manager uses dynamic resource management policies and maintains separate queues to monitor the computational workloads of VMs at each processor $n$. At the beginning of every time slot $t$, it checks the status of VMs and updates their workloads in the queue denoted by $\eta_{i,r_m}^n(t)$. The workload of each VM queue thus evolves as
    \begin{align}
    \eta_{i,r_m}^n(t) = \eta_{\mathrm{rem},i,r_m}^n(t-1) + \sum_{s=1}^K d_{k_j}^n (t) x_{s,r_{im}}^n(t), 
    \end{align}
    where 
    {\small
    \begin{align*}
        \eta_{\mathrm{rem},i,r_m}^n(t-1) = \max \left \{ \sum_{s=1}^K d_{k_j}^n (t) x_{s,r_{im}}^n(t-1) - C_i^n \Delta t, 0 \right \} 
    \end{align*}
    } \\
    represents the remaining workload of the VM $r_{im}^n$ after computing the task assigned in round $(t-1)$. Therefore, the current computational workload per capacity (Byte per CPU cycle (BPC)) on the $m$-th VM at server $i$ can be estimated as \cite{Souravlas_2022} \\
    \begin{equation}
        \Gamma_{i,r_m}^n(t) =  \dfrac{\eta_{i,r_m}^n(t)}{C_i^n \Delta t}.
    \end{equation}
    Based on the VM's workload per capacity, the resource utilization metric is given by
    \begin{footnotesize}
    \begin{align}
    \label{eqn:VM_utilization}
        \phi_{i,r_m}^n(t) = \left \{ \begin{array}{l l}
         0, & \hspace{-1em} \text{if } \Gamma^{\max} \leq \Gamma_{i,r_m}^n(t)   \\
         \dfrac{ \Big \vert \Gamma_{i,r_m}^{n}(t) - \Gamma^{\max}   \Big \vert}{\Gamma^{\max}} , & \hspace{-1em} \text{if } \Gamma^{\mathrm{\min}} \leq \Gamma_{i,r_m}^{n}(t) < \Gamma^{\max}    \\
         1, &  \hspace{-1em} \text{otherwise},
        \end{array} \right. 
    \end{align}
    \end{footnotesize}
    \\
    where $\Gamma^{\max}$ indicates the maximum load allowed per capacity on each VM. Task assignment beyond this limit would overload the VM with the lowest utilization score (i.e. $\phi_{i,r_m}^n = 0$). In contrast, it scores the highest (i.e. $\phi_{i,r_m}^n = 1$) when the VM is underloaded, i.e. the existing load is smaller than the minimum resource utilization threshold $\Gamma^{\mathrm{\min}}$; otherwise, the scoring function determines the resource utilization scores for the VMs under normal workload between 0 and 1. \\
    The expected computation performance quality score of a VM is updated in each time slot $t$, according to its current resource utilization as
    \begin{equation}
       \theta_{i,r_m}^n(t) = \dfrac{W_i^n f_{C_i}^n}{f_{C_{\min}}^n} \phi_{i,r_m}^n (t).
    \end{equation}
    \item[(iii)] \textbf{Collecting Bids from Servers}: After queuing the incoming offloading requests and updating VMs' expected computation performance quality score, the orchestrator initiates the auction with the bid collection process by requesting servers to submit bids for each processor $n$. Also, it provides the servers with information on service provisioning state values from the previous round of the auction. \\
    The servers determine their bids for each round, following their best-response bidding strategies (as we will discuss in \textbf{Section~\ref{sec:bidding_analysis}}). For each server $i$, we use $\mathbf{b}_{i}^n (t) = \left[ b_{i,r_1}^n, b_{i,r_2}^n, \dots, b_{i,r_{R_i^n}}^n \right]$ to denote its bids for its VMs in processor $n$ in round $t$. 
    \item[(iv)] \textbf{Determining Resource Allocation Decisions}: After receiving the updated bids, the orchestrator runs the resource allocation and pricing algorithm at each processor $n$. We consider GSP-based resource allocation and pricing rules to determine the task assignment decisions, $x_{s,r_{im}}^n (t)$, and VM allocation pricing decisions, $p_s^n(t)$. We outline the resource allocation and pricing in \textbf{Algorithm~\ref{algo:GSP_allocation_pricing}}.
    \item[(v)] \textbf{Executing Offloading Tasks at Servers}: In the next stage, the orchestrator notifies the virtualization manager to set up the VMs according to the task-VM matching decisions. The virtualization manager then maps the VMs and tasks to the host servers and UPFs, respectively. It updates the orchestrator after setting up the VMs and then activates the VMs to execute tasks after receiving the notification from the orchestrator. In the end, it sends the computed results back to the UEs via the orchestrator. \\
    After sending the task computation results, the orchestrator updates the UEs about the total offloading service payment information and notifies the virtualization manager to release the assigned VMs. Once the UEs acknowledge the service billing, the orchestrator ends the current offloading transmission session. We consider \textit{pay-per-CPU cycle} payment methods for processing offloading tasks, where payments are collected at the end of the billing cycle (e.g. bi-weekly, monthly, annually). 
\end{itemize}
\subsection{Utility Model of MEC Servers}
We consider the utility model for the servers based on the profits earned from VM allocation in each auction round. The profit for each VM is given by the price difference between the allocation price settled by the auctioneer (orchestrator) and the VM's private valuation. \\ \indent
Let $v_{i,r_m}^n$ be the private valuation ($\$$/VM-hour) of the $m$-th VM at server $i$ in processor $n$, and is determined based on the CPU power consumption at the processor \cite{J_Zhang_2018}:
\begin{equation}
\label{eqn:valuation}
   v_{i,r_m}^n = \rho_i \kappa W_{i}^n  \left(f_{C_i}^n \right)^2, 
\end{equation}
where $\rho_i$ is a scaling parameter to convert the CPU power consumption into monetary value. Besides, $\kappa$ is the effective switched capacitance of the processor. 
Therefore, if server $i$ allocates $m$ VMs to task position $s$ in processor $n$ in time slot $t$, its profit is
\begin{equation}
\label{eqn:mec_utility}    
    u_{i,r_m}^{n} (t) =  \sum_{s=1}^{K} \lambda_{s_j}^n \theta_{i,r_s}^n(t) \left( p_{s}^n (t) - v_{i,r_s}^n \right) x_{s,r_{i(m=s)}}^{n}(t).
\end{equation}
The total utility of server $i$ in time slot $t$ thus yields
\begin{equation}
   U_i(t) = \sum_{n=1}^{N} \sum_{m=1}^{R_i^n} u_{i,r_m}^n(t).
\end{equation}
%
\subsection{Utility Model of Offloading Users}
The utility of the offloading users depends on their QoE in terms of both task offloading cost and task execution latency. The total execution time of task $k_j^n$ offloaded by UE $j$ consists of (i) upload, (ii) queue at the MEC platform, (iii) computation at the allocated VM, and (iv) sending back the results. Often, the size of the computed results is negligible compared to uploading the data; Hence, we ignore the time to send back the results in the downlink. Therefore, the end-to-end offloading service latency (in sec) for the task $k_j^n$ in time slot $t$ can be written as
\begin{equation}
    \delta_{k_j}^n (t) = \delta_{k_j,\mathrm{up}}^{n} (t) + \delta_{k_j,\mathrm{wait}}^{n} (t) + \delta_{k_j, \mathrm{comp}}^{n} (t) \, , 
\end{equation}
where for UE $j$, the upload time to MEC node $i'$ is given by $\delta_{k_j,\mathrm{up}}^{n} (t) = \frac{d_{k_j}^n (t)}{\gamma_{i',j}} (t)$. At server $i$, the computation time is $\delta_{k_j, \mathrm{comp}}^{n} (t) = \frac{d_{k_j}^n (t)}{C_i^n}$. The waiting latency for each request depends only on the computation time of the tasks placed ahead in the task queue. Thus, for a task in position $s \in \mathcal{K}^n(t)$, it can be estimated as $ \delta_{k_j,\mathrm{wait}}^{n} (t) = \sum_{s'=1}^{s-1} \delta_{k_{s'_j, \mathrm{comp}}}^{n}(t)$, where $k_{s'_j}$ denotes the task in an upper position $(s'< s)$ in the queue. \\ \indent
To quantify UE $j$'s level of satisfaction with the overall service latency for a task $k_j^n$, we define the following performance metric:
\begin{align}
    \alpha_{k_j}^n (t) = \left \{ \begin{array}{l l}
         \dfrac{\vert \tau_{\max}^{n} - \delta_{k}^n (t) \vert}{\tau_{\max}^{n}}, & \text{if } 0 < \delta_{k_j}^n \leq \tau_{\max}^{n} \\
         0, & \text{otherwise. }
    \end{array} \right. 
\end{align}
%
Therefore, UE $j$'s QoE in terms of offloading service latency can be estimated as the following mean opinion score
\begin{equation}
    Q_j^{\mathrm{latency}} (t) =  \frac{1}{N} \left [  \sum_{n=1}^{N} \sum_{s=1}^{K} \sum_{r=1}^{R^n}  \alpha_{k_j}^n (t) \, x_{s,r_{im}}^n(t), \, \right ].
\end{equation}
%
Next, considering $\Bar{a}_j$ ($\$$) as the UE $j$'s monetary budget, we model the UE's QoE in terms of offloading service cost using the budget cost savings ratio as 
\begin{equation}
    Q_j^{\mathrm{cost}} = \bigg\vert \frac{\Bar{a}_j - a_j(t)}{\Bar{a}_j} \bigg\vert,
\end{equation}
where $a_{j} (t)$ represents the cost (in \$) to process the offloading tasks, which depends on the processing time spent by the allocated VMs. Formally, 
\begin{equation}
    a_{j} (t) =  \sum_{n=1}^N \sum_{s=1}^{K} \sum_{r=1}^{R^n} \delta_{k_j, \mathrm{comp}}^n(t) \, p_{s}^{n}(t) \, x_{s,r_{im}}^n(t).
\end{equation}
The overall utility gain of the UE $j$ thus becomes 
\begin{equation}
    Q_j = q_l \, Q_j^{\mathrm{latency}} + q_c \, Q_j^{\mathrm{cost}}, 
\end{equation}
where $q_l \in [0,1]$ and $q_c \in [0,1]$ are the QoE coefficients to adjust the trade-off between the offloading service latency and offloading cost, respectively.
\section{GSP-Based Auction Mechanism Design for Dynamic Computation Offloading and Resource Allocation}
\label{sec:repeated_GSP_mechanism}
Concerning the MEC orchestration model, we describe the auction mechanism step-by-step, establishing computation offloading service trading between UEs and MEC service providers. We outline the MEC offloading auction as an online auction where the number of commodities (i.e., resources) and bid values may change over different time slots $t$ depending on the incoming offloading requests. In each time slot $t$, the auction is simply a static game with complete information where the orchestrator runs a resource allocation algorithm and determines the offloading task assignment and resource allocation pricing decisions using the current channel and task information.

As we aim to maximize the total valuation of resource allocation in the dynamic setting, we consider the repeated GSP auction mechanism where the static auction game is repeated over different time slots, with varying input information on offloading tasks. That means the same GSP-based resource allocation algorithm is run in every time slot, which gives offloading decision variables based on submitted bids and offloading tasks in the current time slot. Thus, the repeated auction setting establishes a dynamic auction game with incomplete information, allowing bidders (i.e., MEC servers) to optimize their bidding strategy to win more offloading tasks by updating their bids values prior information, e.g., resource allocation outcomes from the previous auction round.
In the following section, we start formulating the resource allocation optimization problem assuming the static version of computation offloading in a single auction round in time slot $t$. We present the winner determination problem (WDP), detailing offloading decision variables and resource allocation constraints, and then address the solution approaches. Later, we explain the GSP-based resource allocation algorithm followed by the dynamic version of resource allocation and pricing mechanism in the repeated auction setting.
\subsection{Winner Determination Problem Formulation}
We consider a static computation offloading scenario at the $n$-the processor. There are $K$ computing tasks in the queue $\mathcal{K}^n$ and $I$ servers, each having a distinct set of VM resources, $\{R_i^n \}$. Given the quality scores of the VMs, $\theta_{i,r_m}^n$, and the bids submitted by the servers, $b_{i,r_m}^n$, the orchestrator formulates the following optimization problem that maximizes the total allocation valuation by determining the winners (i.e. VMs) for each processor $n$.
\begin{align}
\label{eqn:opt_prob}
  \nonumber & \max && \sum_{s=1}^{K} \,  \sum_{i=1}^I \underset{r_{im} \in R_i^n}{\sum} z_{s,r_{im}}^n\, x_{s,r_{im}}^n \\  
  \nonumber &  \mathrm{s.t.} \quad && (\mathrm{C1})~  \sum_{i=1}^I \underset{r_{im} \in R_i^n}{\sum}  x_{s,r_{im}}^{n} = 1, \quad \forall \, s\\
  \nonumber &&& (\mathrm{C2}) ~~ \overset{K}{\underset{s=1}{\sum}} \,  \underset{r_{im} \in R_i^n}{\sum} x_{s,r_{im}}^{n} \leq R_{i}^n, \quad \forall \, i \\
   &&&  (\mathrm{C3})~ \;  x_{s,r_{im}}^{n} \in \{0,1\}, \, \forall \, s, r_{im},
\end{align}
where $z_{s,r_{im}}^n = \lambda_{s_j}^n \theta_{i,r_m}^n/ b_{i,r_{m}}^n$ represents the allocation valuation for the VM $r_{im} \in R_i^n$, who wins the $s$-th offloading task. In (\ref{eqn:opt_prob}), constraint $(\mathrm{C1})$ guarantees that each offloading task is matched with exactly one VM. The constraint $\mathrm{(C2)}$ ensures that the total number of tasks assigned to a server $i$ does not exceed its VM resource constraints. Furthermore, $(\mathrm{C3})$ means that the  offloading task assignments are binary decision variables. 
\subsection{WDP Solution Approaches}
The orchestrator determines the corresponding resource allocation prices, $\{p_s^n\}$, based on the offloading task assignment decisions, $\{x_{s,r_{im}}^n\}$ obtained through solving the WDP in (\ref{eqn:opt_prob}). To achieve socially-efficient allocation outcomes, one can adopt the classic VCG pricing mechanism \cite{Vickrey_1961} so that the allocation algorithm solves the WDP by selecting the VMs that gives maximum allocation valuation for each offloading task. It then settles the allocation prices for each bidder equivalent to the amount it contributes to social welfare. However, the VCG mechanism is often unsuitable for practical auction design, especially for time-sensitive computation offloading services in MEC. The reasons include NP-hardness of WDP, revenue deficiency, and difficulties handling the bidders' information when the auction is part of a larger sequence of commercial transactions \cite{M_Rothkopf_2007}. \\ \indent
To address the computational complexity of the WDP, we design an approximation algorithm that finds the allocation decisions in polynomial time. We notice that (\ref{eqn:opt_prob}) is an instance of a multidimensional multiple-choice knapsack problem (MDMCKP) \cite{Kellerer_2004}. In that problem, a single knapsack consists of $I$ containers/dimensions, each dimension with a resource constraint of $R_i^n$, and the knapsack packs exactly one task from each offloading UE in $s \in \mathcal{K}^n$ applying the "multiple-choice" constraint of MDMCK. It is well-known that MDMCKP is an NP-hard problem \cite{Kelly_2004}. Therefore, assuming a single dimension by relaxing the resource constraints in $\mathrm{(C2)}$, (\ref{eqn:opt_prob}) boils down to an MCKP. Although MCKP is still NP-hard, it is solvable in pseudo-polynomial time using dynamic programming as long as the number of choices is low for each item \cite{Kelly_2004}. \\ \indent
Thus we use the knapsack dynamic programming-based resource allocations as the upper bound to the WDP (\ref{eqn:opt_prob}). Afterward, we apply the VCG pricing rules to obtain a benchmark solution to the resource allocation prices for computation offloading services. However, still, a more practical auction design to support dynamic resource allocations and the long sequences of service-oriented payment transactions for online auctions in MEC. Hence, we develop a computationally efficient and practically viable repeated auction model in the subsequent section, using the features from dynamic position auction and GSP mechanism \cite{Varian_2007}.
%
\subsection{GSP-Based Resource Allocation and Pricing Mechanism }
%
In this section, we first outline a GSP-based allocation mechanism addressing the WDP (\ref{eqn:opt_prob}). We summarize the modified GSP allocation and pricing algorithm in \textbf{Algorithm~\ref{algo:GSP_allocation_pricing}} that determines the offloading task assignment decisions, $x_{s,r_{im}}^n$, and corresponding allocation prices, $p_s^n$, given the tasks' priority scores, VMs' quality scores, and bids, as inputs.  \\ \indent
For each processor $n$, we assume $K$ distinct task positions following the position auction framework. Each position has a task priority index $\lambda_{s_j}^n$. So, the proposed mechanism first arranges the tasks in decreasing order of the requesting UEs' task priority indices as in $\lambda_{s_1}^n \geq \lambda_{s_2}^n \geq \dots \geq \lambda_{s_K}^n$. \\
Besides, we consider the pool of VM resources, $\mathcal{R}^n$, as the list of items to allocate to the offloading task positions. We also define a function to rank the VMs according to their expected computation performance quality scores, $\theta_{i, r_m}^n$, and the bids, $b_{i, r_m}^n$, submitted by their host server $i$. The ranking score of the $m$-th VM at server $i$ is given by
\begin{equation}
\label{eqn:ranking}
   y_{i, r_m}^n  = \frac{\theta_{i, r_m}^n}{b_{i, r_m}^n}.
\end{equation}
Therefore, the proposed mechanism arranges the VMs into distinct rank positions as in: $y_{i, r_1}^n \geq y_{i,r_2}^n \geq \dots \geq y_{i, r_{R^n}}^n$.
\\ \indent
Next, the mechanism sequentially matches the tasks and VMs according to their positions using the GSP auction allocation rules. That is, the task in the $i$-th position is matched to the VM in the $i$-th rank. Besides, according to the GSP pricing rule, the corresponding allocation price is equal to the bid that the winning VM $r_{im}^n$ requires to maintain its current $m$-th ranking position. Hence, the allocation price for the task position $s$, matched to the VM at rank $m=s$, satisfies
\begin{equation}
    b_{i,r_{s}}^n \leq \frac{\theta_{i,r_s}^n}{\theta_{i,r_{s+1}}^n} b_{i,r_{s+1}}^n. 
\end{equation}
Thus, we define the price adjustment rate with respect to the VM in the $m$-th rank position as
\begin{equation}
    \Theta_m^n = \frac{\theta_{i,r_m}^n}{\theta_{i,r_{m+1}}^n}.
\end{equation}
The monotonicity in allocation prices means that a VM that wins a task in a higher position receives more than lower VMs. To satisfy that, we arrange the price adjustment rates in a decreasing order into $\mathbf{\Theta}_{R}^n$. \\ \indent
Finally, the proposed modified GSP mechanism settles the allocation price for the $s$-th position as
\begin{equation}
\label{eqn:GSP_price}
    p_s^{n} = \left\{ \begin{array}{lcl} 
   \Theta_{R_s}^n b_{i,r_{s+1}}^n, && \text{if } 1 \leq s < K \\
   b_{i,r_s}^n + \epsilon, && \text{if } s = R^n
\end{array} \right.
\end{equation}
where $\epsilon$ is a small positive constant to guarantee that a VM is paid more than its bid, when it is the last one in the ranked list (i.e. $R^n=K$).
%
\begin{algorithm}[t]
\caption{$\mathit{GSP\_Mechanism}$}
\label{algo:GSP_allocation_pricing}
\KwInput{$\lambda^{n}, \theta^n, \textbf{\textit{b}}^n$}
\KwOutput{$\pi^n= (\textbf{\textit{x}}^n, \textbf{\textit{p}}^n)$}
Update the tasks' positions in the queue according to their priority indices, 
$$\mathcal{K}^n \gets sort(\lambda^n, \mathrm{descend})$$ \\
Find each VM's ranking score,\\
\For{$m = 1 \; \mathbf{to} \; R^n$}
{ $y_{i,r_m}^n = \dfrac{\theta_{i,r_m}^n }{b_{i,r_m}^n}$
}
Arrange VMs according to their ranking scores, $\mathcal{R}^n \gets sort(\textbf{\textit{y}}^n, \mathrm{descend})$ \\
Find the price adjustment rates,\\
\For{$m \gets 1 \; \mathbf{to} \; (R^n-1)$}
    {
    $r_m \gets \mathcal{R}^n[m]$, $r_{(m+1)} \gets \mathcal{R}^n[m+1]$,\\
    $\Theta_{m}^n = \frac{\theta_{i,r_m}^n }{\theta_{i,r_{m+1}}^n}$ \\
    }
Arrange price adjustment rates in decreasing order,
$ \quad \mathbf{\Theta}_{R}^n \gets sort(\mathbf{\Theta}^n, \mathrm{descend})$ \\
Sequentially match the tasks in $\mathcal{K}^n$ with ranked VMs in $\mathcal{R}^n$, and find the corresponding allocation prices. \\
\For{$s \gets 1 \; \mathbf{to} \; K$}
    {
    \If{$(1 \leq s < K)$}
    {
        $x_{s,r_{is}}^n = 1$ \\
        $p_s^n = \Theta_{R_s}^n b_{i,r_{s+1}}^n$
    } \ElseIf{$(s==R^n)$}{
        $p_s^n = b_{i,r_{s}}^n + \epsilon$} 
    }
\end{algorithm}
%
\subsection{Repeated GSP-Based Dynamic Resource Allocation and Pricing Mechanism}
%
Now we consider the dynamic computation offloading environment. We model the repeated GSP mechanism as a dynamic game of incomplete information, where the number of offloading tasks and the workloads of the VMs are uncertain until the offloading requests arrive at the processors. We summarize the repeated GSP-based resource allocation and pricing method in \textbf{Algorithm~\ref{algo:MEC_Repeated_GSP}}. The algorithm begins with all the state information on UEs and VMs as inputs. At regular intervals with length $T$, the algorithm updates the priority indices for each processor $n$ using (\ref{eqn:priority}) based on UEs' offloading historical data. \\ \indent
The algorithm continues the provisioning of offloading services at time $t=1, 2, \dots $ with a duration of $\Delta t$. At the beginning of each slot $t$, it gathers incoming offloading requests and places them into the queue $\mathcal{K}^n(t)$ according to the requested application type $n$. Any request that arrives after that will be processed in the next time slot. After queuing the offloading task in each processor $n \in \{1, 2, \dots, N\}$, it initiates the bid collection process by sending a request for bids to the servers along with information on tasks' priority indices $\boldsymbol{\lambda}^n$ and resource allocation decisions from the previous round, i.e. $\pi^n(t-1)$. For initialization, it uses random task assignments. \\ \indent
Next, the algorithm updates the expected computation performance quality scores, $\theta_{i,r_m}^n(t)$, for the VMs based on their current workloads. Upon receiving the bids from all the servers, each processor runs \textbf{Algorithm~\ref{algo:GSP_allocation_pricing}} to obtain the allocation and pricing decisions $\mathbf{\pi}^n(t)$ for the current round. Finally, the algorithm updates the workloads of the VMs according to the new allocation decisions before moving to the next auction round.
\begin{algorithm}[t]
\caption{$\mathit{Repeated\_GSP\_for\_MEC\_Offloading}$}
\label{algo:MEC_Repeated_GSP}
\KwInput{$\mathcal{I},\mathcal{J},\mathcal{N}\}$}
\KwOutput{$\left \{ \pi^n(t) \right \}$} \vspace{0.25em}
Initialize offloading service provisioning states,
$$\mathcal{K}^n \gets \emptyset, \mathcal{R}^n \gets \emptyset, \mathcal{X}^n \gets \emptyset, \mathcal{P}^n \gets \emptyset$$ \\
Update task priority scores at every interval $T$. \\
\For{$n=1 \; \mathbf{to} \; N$}{ \LinesNotNumbered
    \For{$j=1 \; \mathbf{to} \; K$}{
        $\lambda_{s_j}^n = \dfrac{\Hat{\gamma}_{i',j} \beta^n}{\Hat{d}_{s_j}^n \tau_{\max}^n f_{C_{\min}}^n}$
    }
}
Begin offloading service provisioning, \\ 
\While{$(t  \in \{1, 2, \dots \})$}{
    Collect the incoming offloading requests according to the requested task type, \\
    \For{$n=1 \; \mathbf{ to } \; N$}{
        \For{$s=1 \; \mathbf{to} \; K$}{
        $K^n(t)[s] = d_{k_j}^n (t)$
        }
        Request servers to submit their bids,
        $$\textbf{\textit{b}}^n(t) \gets \mathit{Request\_Bids \left(\lambda_{\mathrm{sorted}}^n(t), \pi^{n}(t-1)\right)}$$ \\
        Update VM's computation quality scores, \\
        \For{$i=1 \; \mathbf{to} \; I$}{
            \For{$m=1 \; \mathbf{to} \; R^n_i$}{
                Find resource utilization metric $\phi_{i,r_m}^n(t)$ using eqn. (\ref{eqn:VM_utilization}) \\
                $\theta_{i,r_m}^n(t) = \frac{W_{i}^n f_{C_i}^n}{C_{i}^n} \phi_{i,r_m}^{n}$
            }
        }
        Allocate resources and determine the prices,
        \begin{align*}
            \pi^n(t) \gets \mathit{GSP\_Mechanism} \left( \boldsymbol{\lambda}^n, \boldsymbol{\theta}^n(t), \mathbf{b}^n(t) \right)
        \end{align*} \\
        Update the workloads for allocated VMs, \\
        \For{$i=1 \, \mathbf{to} \, I$}{
            \For{$m=1 \; \mathbf{ to } \; R_i^n$}{
                \For{$s=1 \; \mathbf{to} \; K$}{
                    \If{$(x_{s,r_{im}}^n (t) == 1)$}{
                        $\eta_{i,r_m}^n(t) = d_{k_j}^n(t)$
                        }
                    }
                }
            }
        }
    }
\end{algorithm}
\subsection{A Toy Example}
%
Consider a MEC offloading scenario as illustrated in \textbf{Figure~\ref{fig:GSP_example_1}}, consisting of $J=4$ offloading UEs, and $I=2$ servers offering MEC services for $N=3$ different applications. The MEC orchestrator gathers the incoming offloading requests and forwards them to the task processing queues. Each processor has a task queue with $K=4$ distinct positions, where the task offloaded by UE $j$ is placed in position $s=j$. In \textbf{Fig.~\ref{fig:GSP_example_1}}, the pool of VM resources represented by $\textit{\textbf{R}}^1$, $\textit{\textbf{R}}^2$, and $\textit{\textbf{R}}^3$, respectively.   \\ indent
%
\begin{figure}[hb!]
    \centering
    \includegraphics[scale=0.9]{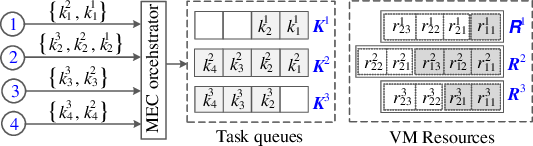}
    \caption{An illustration of incoming offloading requests and VM instances at MEC processors.}
    \label{fig:GSP_example_1}
\end{figure}
%
\begin{example}
\label{example_1}
Consider the MEC system, with the set of offloading tasks and VM resources as shown in \textbf{Fig.~\ref{fig:GSP_example_1}}. We present an example describing the proposed GSP mechanism. The task priority indices of UEs are given as $\boldsymbol{\lambda}^1 = [0.31, 0.20, 0.15, 0.09]$, $\boldsymbol{\lambda}^2 = [0.13, 0.23, 0.14, 0.38]$, and $\boldsymbol{\lambda}^3 = [0.26, 0.11, 0.24, 0.20]$. Besides, the VMs' quality scores and bids are given in \textbf{Fig.~\ref{fig:GSP_example_2}}.\\ \indent
At first, \textbf{Algorithm~\ref{algo:GSP_allocation_pricing}} sorts the arriving tasks in decreasing order of their priority scores in their respective task queues. As demonstrated in \textbf{Fig.~\ref{fig:GSP_example_2})}, the tasks in $\mathcal{K}^1$ are arranged as $\lambda_{s_1^1} > \lambda_{s_2^1}$. Similarly, the tasks in $\mathcal{K}^2$ and $\mathcal{K}^3$ are arranged according their priority scores. Next, the VMs in each processor are arranged according to their ranking scores. For example, VMs in $\mathcal{R}^1$ are ranked as $y_{1,r_1}^1>y_{2,r_2}^1>y_{2,r_3}^1>y_{2,r_1}^1$. Similarly, other VMs are ranked within $\mathcal{R}_2$ and $\mathcal{R}_3$.\\ \indent
Next, the algorithm sequentially matches the tasks in each queue $\mathcal{K}^n$ to the VMs in the queue $\mathcal{R}^n$ (as shown by directed arrows in \textbf{Fig.~\ref{fig:GSP_example_2}}). The allocation prices are determined based on the price adjustment rates and the bid of the VM ranked next. For example, task $k_1^1$ in the first position in $\mathcal{K}^1$ is matched to VM $r_{11}^1$, which is in the top rank in $\mathcal{R}^1$. The corresponding task assignment decision variable is then updated as $x_{1,r_{11}}^1 = 1$, and the allocation price yields $p_1^1 = \Theta_{R_1}^1 b_{2,r_2}^1 = (1.238 \times 0.22) = \$0.285$/VM-hr. The same procedure follows for other assigned tasks.
\end{example}
\begin{figure}[t]
    \centering
    \includegraphics[scale=0.9]{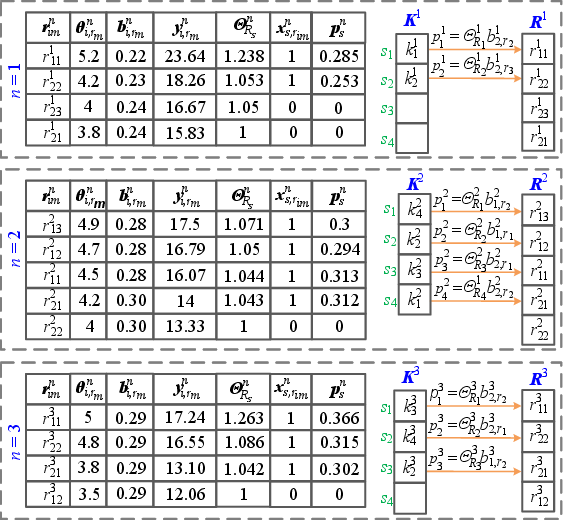}
    \caption{An example illustrating the GSP-based task assignment, VM allocation, and pricing.}
    \label{fig:GSP_example_2}
\end{figure}
\section{Analysis of Bidding Strategies in GSP-Based Dynamic MEC Offloading Auction}
\label{sec:bidding_analysis}
%
In this section, we study the strategic behavior of MEC servers. The servers who participate as bidders can adjust their bids in every time slot by behaving strategically based on the history. The servers thus learn about competitors' bids through interactions over successive auction rounds and adjust their strategies to maximize their utility. To that end, we study best-response bidding strategies for servers to adjust bids within a certain range. That also guarantees to achieve a Nash equilibrium in every round, which maximizes the utility for every participating server.
%
\subsection{Adaptive Balanced Bidding Strategies of Servers}
%
We consider myopic best-response strategies \cite{M_Cary_2014}, where every server $i$ adjusts its bids for the current auction round $t$ under the assumption that other servers repeat their bids of the previous round. In the dynamic computation offloading setting, the CPU utilization and the computation performance qualities of VMs fluctuate based on their computational workloads. Hence, we consider an adaptive balanced bidding policy, where each server devises the current bidding strategies considering the VMs' expected computation performance quality scores and adjusts the bids accordingly. \\ \indent 
In the position auction framework, the offloading tasks in the higher positions return higher profits for VMs, due to the monotonicity in GSP prices. Hence, every bidder adjusts its bids for each VM by targeting a higher task position that can maximize the VM's profit in the next auction round. To win a task with a higher offloading rate, a lower value is necessary so that the VM can obtain a higher-ranking position in the auction. On the contrary, self-interested bidders tend to bid higher for their VMs with higher expected quality scores to earn more profits. \\
Therefore, we propose a balanced bidding strategy that allows a server to adjust the bids for its VMs within a certain range, and yet maximize the profit. To restrict servers from overbidding, we use the restricted balanced bidding (RBB) strategy \cite{Cary_2007}: A server determines the bid for each of its VMs by aiming for the desired task position $s_m^*$ that maximizes the $m$-th VM's utility. It then adjusts its bid based on the allocation price of the target slot $p_{s_m^*}^n (t-1)$ in the previous round. The core concept of balanced bidding is limiting the degree to which a server acts greedy while adjusting the bids for VMs. In the RBB strategy, when a server adapts its bid for each VM $r_{im}^n$, it can only aim for task positions with no higher priority indices than the position that the $m$-th VM has already won in the previous round. That allows the server to bid as low as essential to secure the $s_m^*$-th rank in the next auction round. Moreover, it adjusts the bid in a balanced way so that if the VM cannot win the target slot, it does not end up with a profit lower than the previous round. Below, we define the RBB strategy formally.
\begin{definition}
\label{def:RBB}
Let $\Hat{s}_{m}$ be the position that the $m$-th VM at server $i$ has won in round $t-1$. Assuming that other servers' bids remain fixed at their previous values, the \textbf{Restricted Balanced Bidding (RBB)} strategy is to: 
\begin{itemize}
    \item[(i)] Find the target task position $s_m^*$ among the positions ranging from $\Hat{s}_m$ to $K$ that maximizes the VM's utility
    \begin{align}
    \nonumber s_m^* = \arg \max_{s'} \left \{ \lambda_{s'_{j}}^n \theta_{i,r_m}^n (t) \left(p_{s'}^n(t-1) - v_{i,r_{s'}}^n \right)  \right \}.
    \end{align}
    If the VM has not been allocated to any position in the previous auction round, then the server looks for the target position within the range $1 \leq s' \leq K$.
    \item[(ii)] Adjust the bid for the current auction round $t$ to satisfy
    \begin{align}
    \begin{split}
       \nonumber \lambda^{n}_{s_m^*} \theta_{i,r_m}^n(t-1) \left( p_{s_m^*}^n(t-1) - v_{i,r_m}^n \right) \\ = \lambda^{n}_{s_m^*-1} \theta_{i,r_m}^n(t) \left( b_{i,r_m}^{n}(t) - v_{i,r_m}^n \right). 
    \end{split}
    \end{align}
\end{itemize}
\end{definition}
%
Using the RBB strategy as defined above, the updated bid for the current auction round $t$ for the $m$-th VM at server $i$ yields
\begin{equation}
\label{eqn:RBB}
    b_{i,r_m}^{n}(t) = v_{i,r_m}^n + \Pi_{i,r_{m}}^n \left( p_{s_m^*}^n (t-1) - v_{i,r_m}^n\right),
\end{equation}
where $\Pi_{i,r_{m}}^n = \dfrac{\lambda^{n}_{s^*_m} \theta_{i,r_m}^n(t-1)}{\lambda^{n}_{s^*_m-1} \theta_{i,r_m}^n(t)}$. If the target task position is the topmost one, i.e. $s_m^*=1$, then we assume $\lambda^{n}_{0}  = 2 \lambda^{n}_{1}$ to adjust the bid. In \textbf{Algorithm~\ref{algo:RBB}}, we describe adapting the bid by a server following the RBB bidding strategy for the VMs in every processor $n$.
%
\begin{algorithm}[ht]
\caption{$\mathit{Restricted\_Balanced\_Bidding\_Strategy}$}
\label{algo:RBB}
\KwInput{$\mathbf{\lambda}^{n}, \,  \textbf{\textit{x}}^n (t-1), \,  \textbf{\textit{p}}^n(t-1)$} \vspace{1ex}
\KwOutput{$\textit{\textbf{b}}_i^n(t)$}
Upon receiving bid request from processor $n$,\\
\For{$m=1 \; \mathbf{to} \; R_i^n$}
{
    Find whether the VM has won any task position in the previous time slot $(t-1)$, \\
    \For{$s=1 \; \mathbf{to} \; K$}
    {   
        \If{$(x_{s,r_{im}}^n (t-1)== 1)$}{
            $\Hat{s}_{m} = s$ \\
            \Else{
            $\Hat{s}_{m} = 1$
            }
        }
    }
    Find the target slot $s_m^*$ that maximizes utility, \\  
    \For{$s'=\Hat{s}_{m} \; \mathbf{to} \; K$}{
    $\Hat{u}_{s',m}^n = \lambda_{s'_{j}}^n \theta_{i,r_m}^n(t) \left( p_{s'}^n(t-1) - v_{i,r_{s'}}^n \right)$
    }
    $$s_m^{*} \gets  \arg \max(\Hat{\textbf{\textit{u}}}^n)$$\\
    Adjust bid according to RBB strategy in eqn.~(\ref{eqn:RBB}), \\
    \If{$(s_m^*==1)$}{$\Pi_{i,r_m}^n = \dfrac{ \theta_{i,r_m}^n(t-1)}{2 \, \theta_{i,r_m}^n(t)}$}
    \Else{$\Pi_{i,r_m}^n = \dfrac{\lambda^{n}_{s^*_m} \theta_{i,r_m}^n(t-1)}{\lambda^{n}_{s^*_m-1} \theta_{i,r_m}^n (t)}$}
    $b_{i,r_m}^{n}(t) \gets v_{i,r_m}^n + \Pi_{i,r_m}^n \left( p_{s_m^*}^n (t-1) - v_{i,r_m}^n \right)$ 
}
\end{algorithm}
\subsection{Analysis of Bidding Dynamics on Auction Efficiency}
%
In this section, we analyze the efficiency of the GSP mechanism under dynamic MEC offloading setting considering synchronous bidding model \cite{Varian_2007}, where servers update their bids simultaneously. We show that no server encounters a negative utility gain through their participation in the auction satisfying the \textit{Individual Rationality (IR)} property, when each server follows the RBB strategy in every round of auction (\textbf{Theorem~\ref{theo:IR}}). First, we define the Individual Rationality (IR) property.
%
\begin{definition}
\label{def:IR}
In the GSP-based MEC offloading mechanism, the \textbf{Individual Rationality (IR)} is satisfied, if the resource allocation prices $p_s^n(t)$ guarantee non-negative utility gain, i.e., $u_{i,r_m}^n(t) \geq 0$ for every VM $r_{im}^n$ at server $i$ that participates in the computation procedure at processor $n$ during the time slot $t$.
\end{definition}
\begin{theorem}
\label{theo:IR}
The GSP-based MEC offloading auction at each processor $n$ guarantees the individual rationality for every participating VM, when the servers follow the proposed RBB strategy in every round of auction.
\end{theorem}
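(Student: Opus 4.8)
The plan is to reduce the individual-rationality inequality $u_{i,r_m}^n(t)\ge 0$ to a single price comparison and then discharge that comparison by chaining two monotonicity facts: RBB bids never drop below valuations, and the GSP rule never settles a price below the winner's own bid. By (\ref{eqn:mec_utility}) a VM matched to position $s$ (so that its rank is $m=s$) earns $u_{i,r_m}^n(t)=\lambda_{s_j}^n\,\theta_{i,r_m}^n(t)\,(p_s^n(t)-v_{i,r_m}^n)$, with both $\lambda_{s_j}^n\ge 0$ and $\theta_{i,r_m}^n(t)\ge 0$. Hence an unallocated VM, or one whose current quality score is zero, contributes exactly $0$ and satisfies Definition~\ref{def:IR} trivially; for the remaining VMs the claim is equivalent to the price bound $p_s^n(t)\ge v_{i,r_m}^n$.

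First I would show $b_{i,r_m}^n(t)\ge v_{i,r_m}^n$ for every RBB bid. Rewriting (\ref{eqn:RBB}) gives $b_{i,r_m}^n(t)-v_{i,r_m}^n=\Pi_{i,r_m}^n(p_{s_m^*}^n(t-1)-v_{i,r_m}^n)$, and since $\Pi_{i,r_m}^n$ is nonnegative (a ratio with positive denominator), the sign of the bid surplus equals that of $p_{s_m^*}^n(t-1)-v_{i,r_m}^n$. The point is that the slot $\Hat{s}_m$ won in the previous round always lies in the admissible target range $[\Hat{s}_m,K]$ of Definition~\ref{def:RBB}, so the maximizer $s_m^*$ delivers a utility at least as large as the one evaluated at $\Hat{s}_m$. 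If individual rationality held in round $t-1$ then $p_{\Hat{s}_m}^n(t-1)\ge v_{i,r_m}^n$, making that reference utility nonnegative, so the maximal utility, and with it $p_{s_m^*}^n(t-1)-v_{i,r_m}^n$, is nonnegative as well. This yields $b_{i,r_m}^n(t)\ge v_{i,r_m}^n$.

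Next I would show that the settled price dominates the winner's own bid. For an interior position $1\le s<K$, ranking by $y_{i,r_m}^n=\theta_{i,r_m}^n/b_{i,r_m}^n$ as in (\ref{eqn:ranking}) gives $b_{i,r_s}^n\le(\theta_{i,r_s}^n/\theta_{i,r_{s+1}}^n)\,b_{i,r_{s+1}}^n$, i.e. the GSP price of (\ref{eqn:GSP_price}) is at least the winner's bid; combined with the previous paragraph this gives $p_s^n(t)\ge b_{i,r_s}^n(t)\ge v_{i,r_m}^n$. For the terminal position $s=R^n$ the rule sets $p_s^n=b_{i,r_s}^n+\epsilon>b_{i,r_s}^n\ge v_{i,r_m}^n$, so the bound is strict. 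Either way $p_s^n(t)\ge v_{i,r_m}^n$, and substituting into (\ref{eqn:mec_utility}) closes nonnegativity of the utility.

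The argument is inductive in $t$: the bid bound of the second step consumes individual rationality in round $t-1$, and the price bound of the third step re-establishes it in round $t$. The step I expect to be the main obstacle is closing this induction cleanly. Concretely, I must (i) supply a base case, since round $0$ uses random task assignments, by arguing that the seed prices already satisfy $p_s^n(0)\ge v_{i,r_m}^n$ (for instance by seeding bids at or above valuations, consistent with the valuation-based pricing); (ii) verify that the boundary convention $\lambda_0^n=2\lambda_1^n$ invoked when $s_m^*=1$ keeps $\Pi_{i,r_m}^n$ well defined and nonnegative so the sign argument survives, while VMs with $\theta_{i,r_m}^n(t)=0$ are dismissed by the trivial case of the first paragraph; and (iii) confirm that reordering the adjustment rates into $\mathbf{\Theta}_R^n$ for price monotonicity does not spoil the ``price at least own bid'' inequality used in the third step.
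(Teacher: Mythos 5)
Your proposal is correct and follows essentially the same route as the paper's proof: reduce individual rationality to the price bound $p_s^n(t)\ge v_{i,r_s}^n$, obtain $b_{i,r_s}^n(t)\ge v_{i,r_s}^n$ from the nonnegativity of the RBB bid-adjustment factor, obtain $p_s^n(t)\ge b_{i,r_s}^n(t)$ from the GSP pricing rule, and chain by transitivity. Your explicit induction over rounds (with the base-case concern about the random initialization) is a sharper justification of the step the paper simply asserts, namely that $p_{s_m^*}^n(t-1)-v_{i,r_m}^n\ge 0$, but it is a refinement of the same argument rather than a different approach.
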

\begin{proof}
See \textbf{Appendix~A}.%
\end{proof}
%
Next, we analyze the stability of the GSP-based resource allocation outcomes in a dynamic setting, the same auction mechanism is repeated in every time slot $t$ with new sets of offloading requests at $N$ processors. In this scenario, the concept of stability in the auction mechanism in each processor $n$ is represented in terms of an equilibrium point with a set allocation price decisions, where every server is well-off with the resource allocation decisions and do not wish to exchange any of the VM allocation with another VM during the time slot $t$. Therefore, the auction reaches $N$ distinct equilibrium points at different processors in every time slot $t$. In Theorem~\ref{theo:SNE}, we show that the proposed GSP-based MEC offloading mechanism results into a set of \textit{Symmetric Nash Equilibrium (SNE)} allocation prices in every auction round, where no server prefers to exchange the assigned task positions for any of its VM, with another task position within the same processor. Instead the servers are well-off with their utilities at SNE, and thus maintain the equilibrium by following the RBB strategy for the future auction round. We formally define the SNE allocation prices in a dynamic MEC offloading scenario, as follows:
%
\begin{definition}
\label{def:SNE}
The set of resource allocation prices $\textbf{\textit{p}}^n(t)$ at processor $n$ during the time slot $t$, is in \textbf{Symmetric Nash Equilibrium (SNE)} if the following holds for any task positions $s$ and $s'$ in $\mathcal{K}^n(t)$:
\begin{align}
\label{eqn:SNE_ineq}
    \hspace{-1em} \lambda_{s_j}^n \left(p_s^n(t) - v_{i,r_s}^n \right) \geq \lambda_{s'_j}^n \left(p_{s'}^n(t) - v_{i,r_{s'}}^n \right).
\end{align}
\end{definition}
\begin{theorem}
\label{theo:SNE}
There exists a set of symmetric Nash equilibrium (SNE) allocation prices for GSP-based MEC offloading auction at each processor $n$, when every server $i$ follows the proposed RBB strategy in every round of auction.
\end{theorem}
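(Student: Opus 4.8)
The plan is to show that the price vector $\textbf{\textit{p}}^n(t)$ returned by Algorithm~\ref{algo:GSP_allocation_pricing} under RBB bidding satisfies Definition~\ref{def:SNE}, so that it is itself an SNE. First I would record the two structural facts produced by the sorting steps of the algorithm: the assortative matching (the VM of rank $s$ wins position $s$, because the descending priorities $\lambda_{s_1}^n\ge\dots\ge\lambda_{s_K}^n$ are matched against the descending ranking scores $y_{i,r_m}^n=\theta_{i,r_m}^n/b_{i,r_m}^n$), and the price monotonicity $p_1^n\ge p_2^n\ge\dots\ge p_K^n$ guaranteed by sorting the adjustment rates into $\mathbf{\Theta}_R^n$. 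Reading the inequality (\ref{eqn:SNE_ineq}) across positions is equivalent to requiring the per-position profit sequence $\lambda_{s_j}^n\bigl(p_s^n(t)-v_{i,r_s}^n\bigr)$ to be nonincreasing in $s$; since a nonincreasing chain is characterised by its adjacent comparisons, it suffices to prove the single-step inequality between positions $s$ and $s+1$, which collapses the exponentially many stability constraints to $K-1$ local ones.

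Second, I would substitute the RBB bids into the GSP price rule to verify each local inequality. By Definition~\ref{def:RBB}(ii) the bid of the $m$-th VM satisfies the balanced-bidding identity
\begin{equation*}
\lambda_{s_m^*}^n \theta_{i,r_m}^n(t-1)\bigl(p_{s_m^*}^n(t-1)-v_{i,r_m}^n\bigr)=\lambda_{s_m^*-1}^n \theta_{i,r_m}^n(t)\bigl(b_{i,r_m}^n(t)-v_{i,r_m}^n\bigr),
\end{equation*}
i.e.\ the VM is made exactly indifferent between holding its target slot $s_m^*$ at last round's price and capturing the slot just above it at its new bid. Feeding these bids through $p_s^n=\Theta_{R_s}^n b_{i,r_{s+1}}^n$ turns the balanced-bidding identity into precisely the adjacent instance of the profit-monotonicity inequality: the downward comparison (position $s$ versus the cheaper, lower-priority $s+1$) reduces to the monotonicities $p_s^n\ge p_{s+1}^n$ and $\lambda_{s_j}^n\ge\lambda_{(s+1)_j}^n$ together with the ordering of the valuation margins discussed below, while the upward comparison is pinned by the indifference identity itself, the ``restricted'' clause ensuring the VM never targets a position better than the one it already holds so that no profitable upward swap is created.

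Third, for the existence claim I would argue that the RBB best-response map has a stationary point. Because eqn.~(\ref{eqn:RBB}) writes each bid as $b_{i,r_m}^n(t)=v_{i,r_m}^n+\Pi_{i,r_m}^n\bigl(p_{s_m^*}^n(t-1)-v_{i,r_m}^n\bigr)$ with $\Pi_{i,r_m}^n>0$ and $p_{s_m^*}^n(t-1)\ge v_{i,r_m}^n$ by the individual rationality of Theorem~\ref{theo:IR}, every updated bid stays above its valuation and within a bounded interval; the joint update is therefore a continuous self-map of a compact, convex set and admits a fixed point, whose induced prices satisfy the verified inequalities and hence form the claimed SNE. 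I expect the principal obstacle to be the two features that separate this model from the classical position auction: the valuations $v_{i,r_s}^n$ differ across the VMs occupying different slots, so the profit-monotonicity chain must be argued for a sequence of \emph{distinct} margins rather than a single bidder's value, and the balanced-bid identity mixes $\theta_{i,r_m}^n(t-1)$ with a current-round price that is built from $\theta_{i,r_m}^n(t)$. Closing the inequality cleanly will require controlling this quality-score drift, most naturally by invoking the synchronous-bidding assumption so that the ranking order, and thus the assortative assignment, is preserved from round $t-1$ to round $t$.
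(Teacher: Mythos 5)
Your overall skeleton---reduce the SNE requirement to adjacent-position comparisons, then verify those by combining the GSP price rule (\ref{eqn:GSP_price}) with the RBB guarantee $b_{i,r_m}^n(t)\ge v_{i,r_m}^n$---is the same route the paper takes, but the proposal has two genuine gaps. First, the reduction: you read (\ref{eqn:SNE_ineq}) as monotonicity of the realized-profit sequence $\lambda_{s_j}^n\bigl(p_s^n(t)-v_{i,r_s}^n\bigr)$, where each position carries \emph{its own} VM's valuation, and then invoke the trivial fact that a monotone chain is determined by adjacent comparisons. The paper instead verifies Varian-type swap conditions in which the deviating VM's own valuation appears on \emph{both} sides (for positions $s-1$ and $s$: $\lambda_{(s-1)_j}^n\theta_{i,r_{s-1}}^n\bigl(p_{s-1}^n-v_{i,r_{s-1}}^n\bigr)\ge\lambda_{s_j}^n\theta_{i,r_{s-1}}^n\bigl(p_s^n-v_{i,r_{s-1}}^n\bigr)$, together with the symmetric inequality for the VM at rank $s$), and the reduction to adjacent pairs is then the nontrivial lemma cited from Varian, not the chaining of a monotone sequence. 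The two conditions are not equivalent when valuations differ across slots---neither implies the other---and only the swap version expresses ``no server wants to exchange positions,'' i.e.\ an equilibrium property. Second, and more importantly, you never close the adjacent inequality: you explicitly defer the two difficulties (distinct valuation margins across slots, and the $\theta_{i,r_m}^n(t-1)$ versus $\theta_{i,r_m}^n(t)$ mismatch in the balanced-bid identity) as ``obstacles,'' whereas resolving exactly these is the substance of the paper's argument, which does so through the chain $p_s^n(t)\ge b_{i,r_s}^n(t)\ge v_{i,r_s}^n$ (individual rationality, Theorem~\ref{theo:IR}), the bid ordering induced by the ranking (\ref{eqn:ranking}), and the monotonicity of the sorted price-adjustment rates $\Theta_{R_{(s-1)}}^n\ge\Theta_{R_s}^n\ge 1$. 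A plan that names the hard step but does not carry it out is not a proof of this theorem.

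Your third paragraph does not stand as stated either: the RBB update is not a continuous map, because the target slot $s_m^*$ is an $\arg\max$ over finitely many positions and the updated bid in (\ref{eqn:RBB}) jumps when the maximizer switches, so Brouwer's theorem cannot be applied to the ``joint update'' without further work. Moreover, the fixed-point route answers a different question (stationarity of the bid dynamics across rounds) from what the theorem asserts and the paper proves, namely that in \emph{every} round $t$ the prices output by Algorithm~\ref{algo:GSP_allocation_pricing} under RBB bidding already satisfy the SNE inequalities; no convergence or stationarity argument is needed, or used, for that claim.
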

\begin{proof}
 See \textbf{Appendix B}.
\end{proof}
%
In the dynamic setting, the auction has a set of SNEs and thus may reach different SNE points at various auction rounds. However, the bidders can vary their bids within a certain range without violating the stability of the allocation outcomes. Theorem~\ref{theo:SNE_UP_LB} presents the upper and lower bound to the SNE for the proposed GSP-based resource allocation mechanism.
\begin{theorem}
\label{theo:SNE_UP_LB}
For any task type $n$, the upper- and lower bounds for the bid to satisfy the SNE conditions are given by
\begin{align}
\label{eqn:UB_SNE}
       \nonumber b^{n,\mathrm{UB}}_{i,r_s} x^n_{(s-1),r_{(s-1)}} & = \frac{v^n_{i,r_s}}{\Theta_{R_{(s-1)}}^n} x^n_{(s-1),r_{(s-1)}} \\ &  + \frac{\lambda^*\Theta_{R_{s}}^n}{\Theta_{R_{(s-1)}}^n} \left( b^n_{i,r_{(s+1)}} - v^n_{i,r_s} \right) x^n_{s,r_s},
\end{align}
\begin{align}
\label{eqn:LB_SNE}
       \nonumber b^{n,\mathrm{LB}}_{i,r_s}  x^n_{(s-1),r_{(s-1)}} & = \frac{v^n_{i,r_{(s-1)}}}{\Theta_{R_{(s-1)}}^n} x^n_{(s-1),r_{(s-1)}}  \\ & + \frac{\lambda^*\Theta_{R_{s}}^n}{\Theta_{R_{(s-1)}}^n} \left( b_{i,r_{(s+1)}}^n - v_{i,r_{(s-1)}}^n \right) x_{s,r_s},
\end{align}
where $\lambda* = \frac{\lambda_{s_j}^n}{\lambda_{(s-1)_j}^n}$, $r_{(s-1)}$, $r_s$, and $r_{(s+1)}$ denotes the VMs allocated to the task positions $(s-1)$, $s$, and $(s+1)$, respectively. 
\end{theorem}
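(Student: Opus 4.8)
The plan is to derive the two bounds from the two adjacent-rank stability constraints implied by the SNE condition in Definition~\ref{def:SNE}, and then to solve each one for the bid $b_{i,r_s}^n$. First I would observe that, because the GSP price rule in (\ref{eqn:GSP_price}) couples the price at rank $(s-1)$ to the bid at rank $s$ through $p_{s-1}^n = \Theta_{R_{(s-1)}}^n b_{i,r_s}^n$, while the price at rank $s$ is tied to the bid at rank $(s+1)$ through $p_s^n = \Theta_{R_s}^n b_{i,r_{(s+1)}}^n$, the variable $b_{i,r_s}^n$ enters the equilibrium inequalities (\ref{eqn:SNE_ineq}) only via $p_{s-1}^n$. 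Hence only the two constraints that compare the adjacent positions $(s-1)$ and $s$ can be binding for $b_{i,r_s}^n$, and it suffices to write out (\ref{eqn:SNE_ineq}) for this adjacent pair, holding the other servers' bids fixed as in the synchronous myopic best-response setting.

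Next I would separate the two directions of deviation. The no-upward-deviation constraint for the VM occupying rank $s$ (it must not prefer rank $(s-1)$, holding its own valuation $v_{i,r_s}^n$ fixed) is
\begin{align}
\nonumber \lambda_{s_j}^n\!\left(p_s^n - v_{i,r_s}^n\right) \geq \lambda_{(s-1)_j}^n\!\left(p_{s-1}^n - v_{i,r_s}^n\right),
\end{align}
whereas the no-downward-deviation constraint for the VM occupying rank $(s-1)$ (it must not prefer rank $s$, holding $v_{i,r_{(s-1)}}^n$ fixed) is the companion inequality with $v_{i,r_{(s-1)}}^n$ substituted for $v_{i,r_s}^n$. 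Substituting $p_{s-1}^n = \Theta_{R_{(s-1)}}^n b_{i,r_s}^n$ and $p_s^n = \Theta_{R_s}^n b_{i,r_{(s+1)}}^n$ and solving the first inequality for $b_{i,r_s}^n$ yields the upper bound (\ref{eqn:UB_SNE}); solving the second yields the lower bound (\ref{eqn:LB_SNE}). The ratio $\lambda^* = \lambda_{s_j}^n/\lambda_{(s-1)_j}^n$ appears as the quotient of the two priority indices, and the common prefactor $1/\Theta_{R_{(s-1)}}^n$ arises from inverting $p_{s-1}^n = \Theta_{R_{(s-1)}}^n b_{i,r_s}^n$. I would then multiply each bound by the allocation indicators $x^n_{(s-1),r_{(s-1)}}$ and $x^n_{s,r_s}$, so that the constraints are enforced only when both adjacent positions are actually filled, recovering exactly the stated forms.

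The hard part will be the bookkeeping of the quality-score ratios when a VM is hypothetically relocated between ranks. Because both the ranking in (\ref{eqn:ranking}) and the pricing in (\ref{eqn:GSP_price}) are weighted by the per-VM quality scores $\theta_{i,r_m}^n$, a deviation does not simply swap a fixed per-slot price: the payment a VM would incur at a neighbouring rank is rescaled by the adjustment rate $\Theta_{R_s}^n$, and this rescaling is precisely what attaches the factor $\Theta_{R_s}^n$ to the valuation term inside the parentheses of (\ref{eqn:UB_SNE})--(\ref{eqn:LB_SNE}). I would therefore apply the quality normalisation consistently to both the next-ranked bid $b_{i,r_{(s+1)}}^n$ and the fixed valuation (namely $v_{i,r_s}^n$ for the upper bound and $v_{i,r_{(s-1)}}^n$ for the lower bound) before collecting terms. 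As a final consistency check, I would verify that $b^{n,\mathrm{LB}}_{i,r_s} \leq b^{n,\mathrm{UB}}_{i,r_s}$, so that the feasible bid interval $[\,b^{n,\mathrm{LB}}_{i,r_s},\, b^{n,\mathrm{UB}}_{i,r_s}\,]$ is nonempty and the SNE set is well defined under the monotonicity of the priority indices and the quality scores.
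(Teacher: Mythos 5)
Your proposal is correct and follows essentially the same route as the paper's own proof: the paper obtains the upper bound from the break-even bid at which the VM in rank $s$ would be indifferent to moving up to rank $(s-1)$, and the lower bound from the defensive bid at which the VM in rank $(s-1)$ would be indifferent to dropping down to rank $s$ --- exactly the binding cases of your two adjacent-position SNE inequalities, solved for $b_{i,r_s}^n$ after substituting $p_{s-1}^n = \Theta_{R_{(s-1)}}^n b_{i,r_s}^n$ and $p_s^n = \Theta_{R_s}^n b_{i,r_{(s+1)}}^n$. Even the bookkeeping subtlety you flag (attaching $\Theta_{R_s}^n$ to the valuation term as well as the bid when collecting terms) is handled in the paper by the same implicit rescaling, so your treatment matches theirs step for step.
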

\begin{proof}
See \textbf{Appendix C}.
\end{proof}
%
To validate that the proposed GSP-based MEC offloading auction reaches the equilibrium point within a polynomial time in every auction round, we analyze the computational complexity of the proposed resource allocation and pricing algorithms. Using \textbf{Algorithm~\ref{algo:MEC_Repeated_GSP}} in every time slot $t$, one can obtain the computation offloading task assignment and allocate VM resources. The algorithm has prior knowledge about the task priority indices. Theorem~\ref{theo:CE} states some results about the computational efficiency of our proposed mechanism.
%
\begin{theorem}
\label{theo:CE}
The proposed GSP-based MEC offloading auction is computationally efficient, i.e. the resource allocation outcomes are determined in polynomial time in every auction round $t$.
\end{theorem}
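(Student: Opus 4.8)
The plan is to prove Theorem~\ref{theo:CE} by a direct line-by-line complexity analysis of \textbf{Algorithm~\ref{algo:MEC_Repeated_GSP}} together with its inner call to \textbf{Algorithm~\ref{algo:GSP_allocation_pricing}}, showing that the total work per auction round $t$ at a single processor $n$ is bounded by a polynomial in the problem parameters $K$ (number of task positions, equal to $J$), $I$ (number of servers), and $R^n = \sum_{i=1}^I R_i^n$ (total number of VMs in the resource pool). Since $N$ processors run independently, multiplying by $N$ keeps the bound polynomial, so it suffices to fix one processor.

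First I would decompose \textbf{Algorithm~\ref{algo:GSP_allocation_pricing}} into its four phases and bound each. Sorting the tasks by priority index costs $O(K \log K)$; computing the $R^n$ ranking scores $y_{i,r_m}^n = \theta_{i,r_m}^n / b_{i,r_m}^n$ via (\ref{eqn:ranking}) and sorting the VMs costs $O(R^n \log R^n)$; computing the $R^n - 1$ price adjustment rates $\Theta_m^n$ and sorting them into $\mathbf{\Theta}_R^n$ costs $O(R^n \log R^n)$; and the final sequential matching-and-pricing loop over the $K$ positions, each step setting one decision variable $x_{s,r_{is}}^n$ and one price $p_s^n$ via (\ref{eqn:GSP_price}), costs $O(K)$. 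Summing gives a per-processor cost of $O\!\left(K \log K + R^n \log R^n\right)$ for one invocation of the GSP mechanism, which is polynomial.

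Next I would account for the surrounding per-round bookkeeping in \textbf{Algorithm~\ref{algo:MEC_Repeated_GSP}}: collecting and queuing incoming requests is $O(K)$ per processor; the bid-collection request and the servers' RBB responses (\textbf{Algorithm~\ref{algo:RBB}}, which I may treat as a subroutine whose own cost I bound separately as $O(K R_i^n)$ per server, hence $O(K R^n)$ in aggregate, dominated by a nested loop over positions and VMs); updating the $R^n$ quality scores $\theta_{i,r_m}^n(t)$ via (\ref{eqn:VM_utilization}) is $O(R^n)$; and the final workload-update triple loop over servers, VMs, and positions is $O(K R^n)$. Adding these to the GSP cost and then summing over the $N$ processors yields a total per-round complexity of $O\!\left(N\left(K R^n + R^n \log R^n + K \log K\right)\right)$, which is polynomial in $N$, $K$, and $R^n$; combined with Theorems~\ref{theo:SNE} and~\ref{theo:SNE_UP_LB} guaranteeing that each round's output lies at an equilibrium, this establishes that the equilibrium allocation is computed in polynomial time per round.

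The main obstacle I anticipate is not the arithmetic of summing loop costs but rather pinning down precisely what "computational efficiency in every auction round" should mean here, and making sure no hidden cost is swept under the rug. In particular, I must be careful that the RBB bidding step is genuinely polynomial and does not itself require solving an embedded optimization, and that the $\arg\max$ over target positions $s_m^*$ in Definition~\ref{def:RBB} is a simple linear scan over at most $K$ candidates rather than a search over an exponentially large strategy space. I would therefore state explicitly that each server's best-response computation reduces to evaluating the utility expression $\lambda_{s'_j}^n \theta_{i,r_m}^n(t)\left(p_{s'}^n(t-1) - v_{i,r_{s'}}^n\right)$ at each of the $O(K)$ feasible slots and taking the maximum, which is $O(K)$ per VM. The contrast I would draw, to justify the claim's significance, is against the exact WDP solution: the problem (\ref{eqn:opt_prob}) is an MDMCKP and hence NP-hard, so the polynomial bound is a genuine gain delivered by the GSP ranking-and-matching heuristic replacing the exact knapsack solve.
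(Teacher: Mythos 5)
Your proposal is correct and takes essentially the same route as the paper's own proof: a direct line-by-line accounting of the loop costs in \textbf{Algorithm~\ref{algo:GSP_allocation_pricing}}, \textbf{Algorithm~\ref{algo:MEC_Repeated_GSP}}, and \textbf{Algorithm~\ref{algo:RBB}}, summed over the $N$ processors, concluding a polynomial per-round bound. Your bookkeeping is in places slightly more careful than the paper's (e.g., you bound the RBB target-slot scan as $O(K R_i^n)$ per server and the workload-update triple loop as $O(K R^n)$, where the paper states $O(R_i^n)$ and $O(1)$ respectively), but this only sharpens the bounds and does not change the argument or the conclusion.
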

\begin{proof}
See \textbf{Appendix D}.
\end{proof}
\section{Numerical Results} 
\label{sec:results}

We present the numerical results on the convergence properties of the proposed RBB bidding strategy, and the performance of the proposed GSP auction mechanism. We consider a simulation setup for the MEC system that manages computation offloading service provisioning within a $ 250 \times 250 \mathrm{m}^2 $ area. There are $I=5$ servers/nodes and $J=150$ wireless UEs, randomly located within this area following uniform and non-uniform distribution, respectively. The UEs are associated with the nearest physical MEC node. The offloading requests are generated randomly in each auction round, where the offloading data sizes follow the Poisson distribution with the parameter $d_{\mathrm{avg}}$. \textbf{Table~\ref{tab:parameters}} lists the MEC offloading and wireless channel parameters. \textbf{Table~\ref{tab:VM_model}} provides the VM configuration model. Each server offers a single type of VM instance compatible with the $n=1$-th MEC application. To capture the dynamics of the wireless channel and offloading process more accurately, we simulate $1000$ times for each time slot and use the average for that auction round.
\begin{table}[t!]
\caption{MEC offloading and wireless channel parameters}
\label{tab:parameters}
\begin{tabular}{l|l}
    \toprule
     $N=1, \, I=2$  &   $J = 150, \, K=J \, \Delta_t = 1$ min \\
    $f_{C_{\min}} = 3.2$ GHz& $F_t$= 5.8 GHz \\
    $[\tau_{\min}, \tau_{\max}] = [20,200]$ msec & $P_u$ = 20 dBm \\
    $\mathrm{BW}$ = 80 MHz  & $\sigma_N^2$ = -100 dBm \\
    $[\Gamma_{\min},\Gamma_{\max}] = [0.01, 0.90]$ & $\mu_d=2.12, \mu_0 = 29.2, \mu_{f} = 2.11$ \\
    $d_k \sim \mathrm{Poisson}(d_{\mathrm{avg}})$ MB &  $d_{\mathrm{avg}} \sim  [10,40]$ MB\\
    $\kappa = 10^{-24}, \epsilon = 0.001$ &    $q_c = 0.5, q_l = 0.5$   \\
    $\Bar{a} = \$\, 20$/month  &  \\
    \bottomrule
\end{tabular}
\end{table}
\begin{table}[t]
\setlength{\belowcaptionskip}{-5pt}\caption{VM configuration model}
\label{tab:VM_model}
\begin{tabular}{c|c|c|c|c|c|c}
    \toprule
    \multirow{2}{*}{MEC Server} & \multirow{2}{*}{$R_i^n$}  & \multirow{2}{*}{$W_i$} & \multirow{2}{*}{RAM} & \multirow{2}{*}{$f_{C_i}^n$} & \multirow{2}{*}{$C_i$}  & \multirow{2}{*}{$\rho_i$} \\ 
    \multirow{2}{*}{$i$} &  &  & \multirow{2}{*}{(GB)} & \multirow{2}{*}{(GHz)} & \multirow{2}{*}{(MIPS)} & \multirow{2}{*}{($\$$/VM-hr)} \\ [2ex]
    \hline
     $i=1$ & 60 &  2 & 4 & 3.3 GHz & 32  &  0.0452 \\ 
     $i=2$ & 60 &  2 & 4 & 3.5 GHz & 24  &  0.0435 \\ 
     $i=3$ & 40 &  2 & 4 & 3.2 GHz & 24  &  0.0385 \\ 
     $i=4$ & 50 &  1 & 2 & 3.2 GHz & 16  &  0.0186 \\ 
     $i=5$ & 50 &  1 & 2 & 3.3 GHz & 16  &  0.0175 \\ 
    \bottomrule
\end{tabular}
\end{table}

\subsection{Convergence of Bidding Strategies}
To analyze the convergence properties for the proposed GSP-based mechanism, we consider a static offloading scenario assuming all $J=150$ UEs offload the same length of computing tasks in every round of auction. Furthermore, there are $I=2$ servers with private valuations $v_1 = \$ \, 0.0354$ and $v_2 = \$ \, 0.0366$, respectively. We study their competitive bidding behavior by following the proposed RBB strategy. We investigate three cases with the various number of available VMs, i.e., $R = [150, 1]$, $R=[1, 150]$, and $R=[80, 80]$, representing different levels of competition among the servers. The first two cases represent the monopoly market scenario where server $i=1$ and server $i=2$ dominate the supply of VM resources, respectively. The third case corresponds to fair competition among servers. To identify the convergence point, we set the convergence tolerance to $0.0001$.\\ \indent
\textbf{Fig.~\ref{fig:bid_conv_1}} shows the average bids of the servers and the respective allocation prices for the above three cases. The solid- and dashed lines represent the bids submitted by the server $i=1$ and the server $i=2$,  respectively. In all three cases, the servers bid no less than their private valuations (as shown by the green lines in \textbf{Fig.~\ref{fig:bid_conv_1}(a)}), ensuring non-negative profit/utility gain. Every server tends to increase its bid prices to beat the opponent, which gradually converges to some fixed points after a finite number of auction rounds. \textbf{Fig.~\ref{fig:bid_conv_1}(b)} shows the corresponding allocation prices, where we compare the proposed GSP-based allocation prices with the VCG-based prices obtained from the Knapsack-based dynamic programming solution approach for (\ref{eqn:opt_prob}). In all three cases, the VCG-based allocation prices are constant and lower than the proposed GSP-based prices, confirming that VCG mechanism is welfare-maximizing (i.e., buyer-friendly). \\ 
%
\begin{figure}[t!]
    \centering
    \includegraphics[scale=0.6]{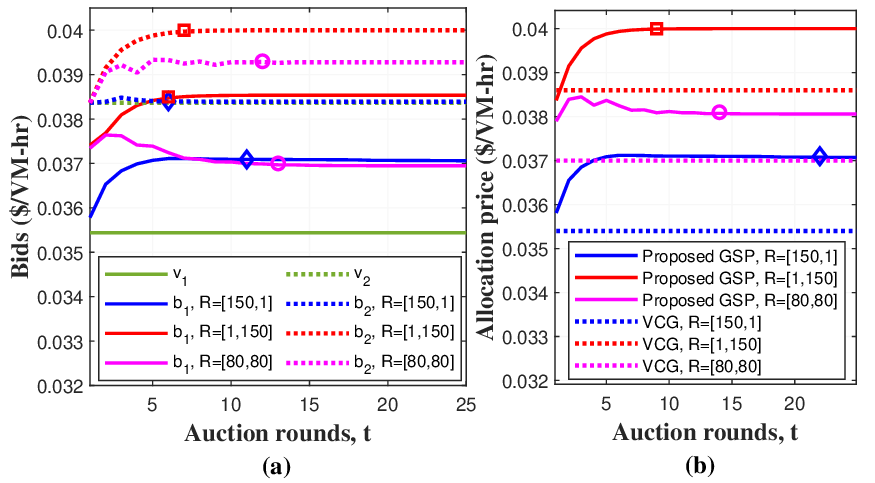}
    \setlength{\belowcaptionskip}{-10pt}\caption{Convergence analysis for the proposed RBB bidding strategies, based on (a) average bid prices of each server, $b_i^n$, and (b) average allocation prices, $p^n$, with $I=2$ servers and varying number of VMs, $R_i^n$.}
    \label{fig:bid_conv_1}
\end{figure}
%

In the first case, the bids submitted by server $i=1$ and server $i=2$ converge to $b_1^* = \$ \, 0.0371$/VM-hr and $b_2^* = \$ \, 0.0384$/VM-hr, respectively (\textbf{Fig.~\ref{fig:bid_conv_1}(a)}). In this case, server $i=1$ dominates the market supply. Thus the majority of the offloading tasks go to VMs at that server. The corresponding allocation prices reach the equilibrium point $p^* = \$ \, 0.0371$/VM-hr (\textbf{Fig.~\ref{fig:bid_conv_1}(b)}). A similar trend is observed for the second case, where server $i=2$ dominates the market supply, and wins the majority of the tasks. In this case, server $i=1$ tries to beat the opponent despite its lower supply of VM resources, which is reflected in his/her increased bids. When the market has a fair level of competition as in the third case, the average bids of both servers are decreased. The corresponding allocation prices, shown in \textbf{Fig.~\ref{fig:bid_conv_1}(b)}, confirm that bringing more sellers into the offloading service market influences the market to settle at a lower equilibrium price, $p^*(t=14) = \$ 0.0381$/VM-hr, which is more favorable to the UEs.\\ \indent 
%
\begin{figure}[b!]
    \centering
    \vspace{-2 em}
    \includegraphics[scale=0.62]{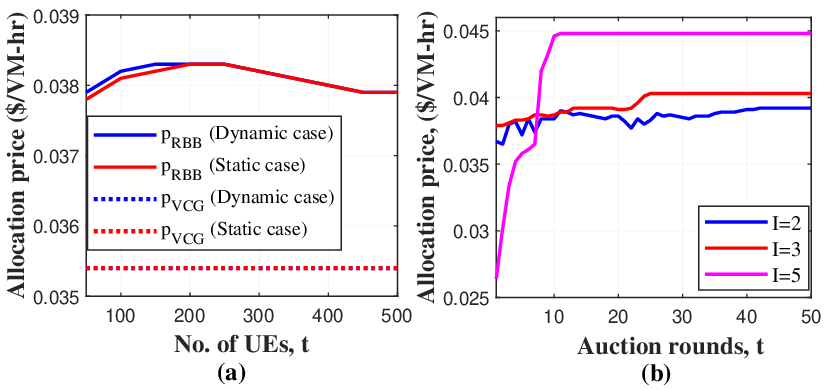}
    \caption{Comparison of average allocation prices for (a) varying number of UEs, $J$; and (b) MEC servers, $I$.}
    \label{fig:bid_conv_2}
\end{figure}
%
Next, we analyze the convergence of the proposed RBB strategies considering the dynamic offloading scenario, where the size of the computing tasks changes following the Poisson distribution. In \textbf{Fig.~\ref{fig:bid_conv_2}(a)}, we compare the average allocation prices for the proposed GSP mechanism with the VCG mechanism for varying numbers of UEs, starting from $J=50$ to $J=500$. Furthermore, we assume the size of the offloading task queue is the same as the number of UEs, i.e., $K=J$, and the number of VMs at the servers as $\textbf{\textit{R}}=[250, 250]$. As shown in \textbf{Fig.~\ref{fig:bid_conv_2}(a)}, the average allocation prices for both the static and dynamic cases are very close for a lower number of UEs. Eventually, as the number of UEs increases, they coincide. In the beginning, with $J=50$, the resource utilization is low. Hence, the proposed GSP mechanism selects low allocation prices, i.e., $p_{\mathrm{GSP}} = \$ \, 0.0378$/VM-hr and $p_{\mathrm{GSP}} = \$ \, 0.0379$/VM-hr for the static and dynamic cases, respectively. As the number of UEs, thus resource utilization and competition among the server, increases, the corresponding allocation prices increase. They reach the maximum of $p_{\mathrm{GSP}} = \$ \, 0.0383$/VM-hr for both the static and dynamic cases at $J=250$. At this point, the servers exhibit the maximum level of competition since, for each UE, the probability of being matched with either server is equal. When the number of UEs grows to even larger values, the allocation prices start to drop because the increasing workloads at the VMs restrict the servers from raising the bids further. In both the static and dynamic cases, the VCG-based allocation prices remain fixed at $p_{\mathrm{VCG}} = \$ \, 0.0354$/VM-hr, lower than the proposed GSP-based prices. \\ \indent
In \textbf{Fig.~\ref{fig:bid_conv_2}(b)}, we compare average allocation prices for the proposed GSP mechanism for different numbers of resource sellers in the offloading service market. Considering a dynamic offloading scenario with $J=120$ UEs and a total of $R=120$ VMs, we study three different cases with $I=2$, $I=3$, and $I=5$, each server having $R_i = 60$, $R_i = 40$, and $R_i = 24$ VMs, in each case respectively. As the number of servers increases, the competition among the resource sellers grows, which results increased allocation prices. When the number of servers $I=5$, the average allocation prices converge to $p^* = \$ \, 0.0448$/VM-hr which is approximately $11 \% $, and $14 \%$ higher than the cases with $I=2$ and $I=3$, respectively.
%
\subsection{Auction Revenue and Profits of Servers}
To evaluate the auction performance of the proposed GSP-based offloading mechanism, we compare the results with the other greedy bidding strategies in GSP mechanism \cite{M_Cary_2007}, such as: (i) balanced bidding (BB) strategy, where servers can target any task position maximizes its utility gain, (ii) altruistic bidding (AB) strategy, where servers always bid lower than the allocation price of the target task position favoring the buyers, and (iii) competitor busting (CB) strategy, where the servers act vindictively by bidding higher than the allocation price of the target task position so that the competitor ends up with a lower utility gain. \\ \indent
%
\begin{figure}[t!]
    \includegraphics[scale=0.6]{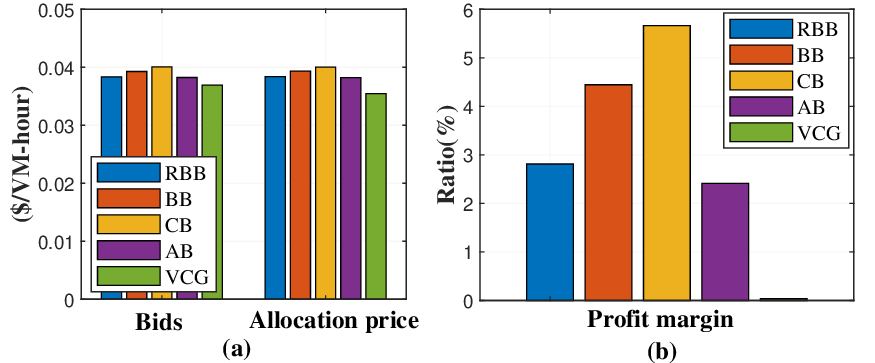}
    \caption{Performance comparison of proposed RBB and other existing bidding strategies, based on: (a) Submitted bids vs allocation prices; (b) Profit margin ratio (\%) of servers.}
    \label{fig:MEC_profit_1}
\end{figure}
Considering $I=2$ servers with $R=[80,80]$ VMs and $J=150$ UEs, we first compare the equilibrium bids (i.e., the point of bid convergence) and allocation prices of the proposed strategy with other greedy strategies and truthful bidding in VCG mechanism. As shown in \textbf{Fig.~\ref{fig:MEC_profit_1}(a)}, the CB strategy results in the highest average bid, thus the highest allocation price; nevertheless, it is not suitable for both sellers and buyers. Among other strategies, the BB strategy gives higher bids and allocation prices than the proposed RBB strategy. That is because the BB strategy allows servers to choose their preferable target positions without any restriction, unlike our proposed strategy. The BB strategy also performs well in terms of a profit margin ratio of $4.44\%$ (\textbf{Fig.~\ref{fig:MEC_profit_1}(b)}). However, our proposed RBB strategy results in $2.811\%$, which is still better than the AB strategy and the VCG mechanism. \\ \indent
\begin{figure}[t!]
    \centering
    \includegraphics[scale=0.6]{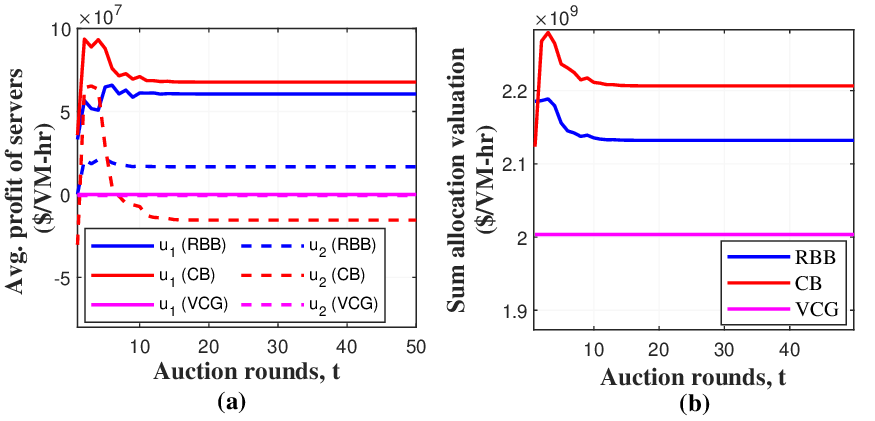}
    \setlength{\belowcaptionskip}{-15pt}\caption{Performance evaluation of proposed RBB and other existing bidding strategies, based on (a) average profits of servers; and (b) sum allocation valuation }
    \label{fig:MEC_profit_2}
\end{figure}
Next, we analyze the average profit of each server and the auction revenue, i.e., the sum valuation of task computation at the allocated resources. We compare the results for the proposed RBB strategy with the truthful bidding in the VCG mechanism and the CB strategy, representing the best-case and the worst-case scenarios from the users' perspectives. As shown in \textbf{Fig.~\ref{fig:MEC_profit_2}(a)}, the average profits of server $i=1$ and $i=2$ for the proposed RBB strategy converge to $u_i^* = \$ \, 6.07 \times 10^7$/VM-hr and $u_i^* = \$ \, 1.67 \times 10^7$/VM-hr, respectively. For the CB strategy, although the average profit of server $i=1$ becomes even higher than the proposed RBB strategy (shown with the solid red line), server $i=2$ ends up with a negative utility gain because of the vindictive bidding behavior. The servers' gains for the VCG mechanism remain zero. A similar trend appears in \textbf{Fig.~\ref{fig:MEC_profit_2}(b)}, where the proposed RBB strategy yields a higher sum allocation valuation than the VCG mechanism, whereas the CB strategy exceeds the total valuation due to higher allocation prices. 
\subsection{Social Welfare and QoE Analysis}

In this section, we analyze the performance of the proposed GSP-based offloading mechanism addressing the QoE of UEs and the overall social welfare of the MEC system. With $I=3$ servers and $K=300$, we compare the social welfare and the average utility of UEs of the proposed GSP mechanism with the VCG mechanism in \textbf{Fig.~\ref{fig:SW_UE_utility}}. To that end, we change the number of UEs from $J=50$ to $J=550$. We evaluate the results for two cases, $R=300$ and $R=450$, with $R_i=100$ VMs and $R_i=150$ VMs in each server $i$, respectively. \\ \indent 
%
\begin{figure}[ht!]
    \hspace{-0.8em}
    \includegraphics[scale=0.6]{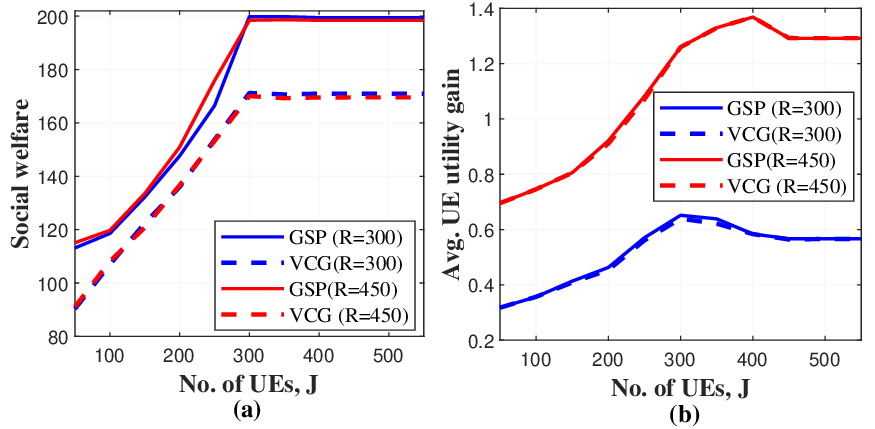}
    \setlength{\belowcaptionskip}{-10pt}\caption{Performance comparison between the proposed GSP and VCG mechanism for varying number of VMs, based on (a) Social welfare; and (b) Average utility gain of UEs.}
    \label{fig:SW_UE_utility}
\end{figure}
%
\textbf{Fig.~\ref{fig:SW_UE_utility}(a)} shows that the social welfare for the GSP- and VCG mechanisms increases as the number of UEs increases. It then remains fixed after the cut-off point $J=K=300$ when the number of UEs surpasses the capacity limit. Furthermore, the proposed GSP mechanism provides higher social welfare than the VCG mechanism for both cases. Although the VCG mechanism guarantees higher utility gain for the UEs than the proposed GSP mechanism (\textbf{Fig.~\ref{fig:SW_UE_utility}(b)}), the latter results in higher social welfare than the former because the profit of the servers is always higher. \textbf{Fig.~\ref{fig:SW_UE_utility}(b)} also reveals that the average utility gain of UEs for the GSP- and VCG mechanisms coincide for both cases. When the MEC system has the same resource availability as the system capacity limit, i.e., $R=300$, then the average utility gain increases until the number of UEs hits the system capacity, i.e., $K=300$. It further decreases until it reaches the fixed point at $Q^* = 0.567$. For the second case with $R=450$, the average utility gain of UEs improves with more supply of VM resources for both mechanisms. In this case, the UEs' average QoE reaches its maximum when $J=400$ and then goes down to the fixed point of $Q^* = 1.291$. In the case of social welfare, the changes in VM resource supply (e.g., $R=300$ and $R=450$) have no significant impact on the mechanisms (\textbf{Fig.~\ref{fig:SW_UE_utility}(b)}). \\ \indent 
Next, we study the QoE of UEs for three different cases by varying the average offloading task sizes: (i) $d_\mathrm{avg} = [5,20]$, (ii) $d_\mathrm{avg} = [10, 40]$, and (iii) $d_\mathrm{avg} = [20, 100]$. With $R=450$ and $K=400$, we plot the average offloading cost and service latency of UEs in \textbf{Fig.~\ref{fig:QoE_cost_latency}}. The results are almost the same for both mechanisms. The average offloading cost and service latency of UEs tend to increase as the number of UEs increases until it reaches the system's capacity limit, i.e., $K=400$, and then remain fixed. Moreover, when the average offloading data sizes increase, the corresponding average offloading cost and service latency of UEs rise significantly. To be specific, when the average offloading task size is $d_\mathrm{avg} = [20, 100]$, the average offloading service latency exceeds the maximum task completion threshold, $\tau_{\mathrm{max}} = 200$ msec, after the number of UEs reaches $J=200$. The average offloading cost for the case with $d_\mathrm{avg} = [20, 100]$ is more than double the offloading compared to others.

\section{Conclusion}
\label{sec:conclusion}
We have presented a dynamic resource allocation mechanism based on the SDN-enabled MEC system architecture, which implements computation offloading as a service via an auction process. The main objective of our proposed auction-based computation offloading mechanism is to maximize the sum valuation of the computing resources at the servers while satisfying the QoE of offloading users. In a dynamic setting, where the number of offloading requests, the corresponding task lengths, and the computational workloads of the servers vary over time, we formulate the offloading task assignment and resource allocation decision problem. As this becomes an instance of the NP-hard online multiple choice multiple Knapsack problem, we have developed a modified repeated GSP auction-based resource allocation and pricing algorithm, which is computationally efficient and satisfies individual rationality property. Furthermore, we have proposed a restricted balanced bidding (RBB) strategy for the servers, which guarantees the symmetric Nash equilibrium (SNE) condition for the proposed GSP mechanism in every round of auction. Using the VCG mechanism in combination with the Knapsack dynamic programming-based resource allocation as the benchmark solution, we have provided extensive numerical analysis to evaluate the convergence properties of the proposed RBB strategy and the performance of the proposed GSP mechanism from the perspectives of both the servers and UEs.

\begin{figure}[t!]
    \hspace{-0.8em}
    \includegraphics[scale=0.6]{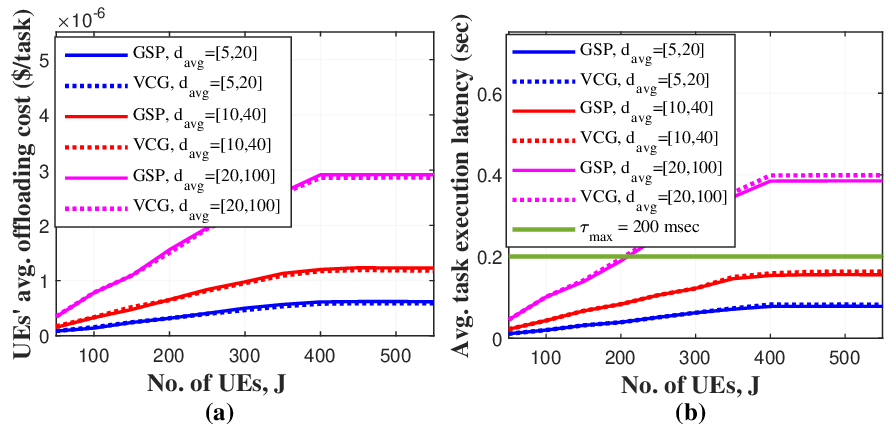}
    \setlength{\belowcaptionskip}{-9pt}\caption{QoE Analysis for the proposed GSP- and VCG mechanisms with varying offloading task sizes, showing (a) average offloading cost; and (b) service latency.}
    \label{fig:QoE_cost_latency}
\end{figure}

\bibliographystyle{IEEEtran}
\bibliography{References.bib}

\end{document}